\newcommand{\ignore}[1]{}
\newcommand{\specialcell}[2][c]{%
	\begin{tabular}[#1]{@{}c@{}}#2\end{tabular}}
\mathchardef\mhyphen="2D
\newcommand{\algrule}[1][.2pt]{\par\vskip.5\baselineskip\hrule height #1\par\vskip.5\baselineskip}
\newcommand{\A}{$\mathcal{A}$~}
\newcommand{\Rn}{\ensuremath \stackrel{\$}{\leftarrow}\mathbb{Z}_{N}^{*}{\xspace}}
\mathchardef\mhyphen="2D
\newcommand{\sk}{\ensuremath {\mathit{sk}}{\xspace}}
\newcommand{\as}{\ensuremath {\leftarrow}{\xspace}}
\newcommand{\E}{\ensuremath {\mathcal{E}}{\xspace}}
\newcommand{\hsim}{\ensuremath {\mathit{H}\mhyphen\mathit{Sim}}{\xspace}}
\newcommand{\ro}{\ensuremath {\mathit{RO}(\cdot)}{\xspace}}
\newcommand{\Adv}{\ensuremath { \mathcal{A}}}
\newcommand{\bpvsch}{\ensuremath {\texttt{BPV}}\mhyphen{\texttt{SC}}\mhyphen{\texttt{FourQ}}\mhyphen{\texttt{Schnorr}}{\xspace}}
\newcommand{\bpvschkg}{\ensuremath {\texttt{BPV}}\mhyphen{\texttt{SC}}\mhyphen{\texttt{FourQ}}\mhyphen{\texttt{Schnorr.Kg}}{\xspace}}
\newcommand{\bpvschsig}{\ensuremath{\texttt{BPV}}\mhyphen{\texttt{SC}}\mhyphen{\texttt{FourQ}}\mhyphen{\texttt{Schnorr.Sig}}{\xspace}}
\newcommand{\bpvschver}{\ensuremath{\texttt{BPV}}\mhyphen{\texttt{SC}}\mhyphen{\texttt{FourQ}}\mhyphen{\texttt{Schnorr.Ver}}{\xspace}}
\newcommand{\bpvies}{\ensuremath{\texttt{DBPV}}\mhyphen{\texttt{SC}}\mhyphen{\texttt{FourQ}}\mhyphen{\texttt{ECIES}}{\xspace}}
\newcommand{\bpvieskg}{\ensuremath{\texttt{DBPV}}\mhyphen{\texttt{SC}}\mhyphen{\texttt{FourQ}}\mhyphen{\texttt{ECIES.Kg}}{\xspace}}
\newcommand{\bpviessig}{\ensuremath{\texttt{DBPV}}\mhyphen{\texttt{SC}}\mhyphen{\texttt{FourQ}}\mhyphen{\texttt{ECIES.Enc}}{\xspace}}
\newcommand{\bpviesver}{\ensuremath{\texttt{DBPV}}\mhyphen{\texttt{SC}}\mhyphen{\texttt{FourQ}}\mhyphen{\texttt{ECIES.Dec}}{\xspace}}
\newcommand{\params}{\ensuremath {\texttt{params}}{\xspace}}
\newcommand{\BPV}{\ensuremath {\texttt{BPV}}{\xspace}}
\newcommand{\BPVOff}{\ensuremath {\texttt{BPV.Offline}}{\xspace}}
\newcommand{\BPVOn}{\ensuremath {\texttt{BPV.Online}}{\xspace}}
\newcommand{\eat}[1]{}                % Comment out large swaths of text
\newcounter{linecounter}
\newcommand{\C}{$\mathcal{C}$}
\newcommand{\keyRequest}{\ensuremath {\texttt{KeyQry}(\cdot)}{\xspace}}
\newcommand{\keyVer}{\ensuremath {\texttt{KeyVer}}{\xspace}}
\def\Snospace~{\S{}}
\newtheorem{theorem}{Theorem}
\newtheorem{def1}{Definition}
\newcommand{\Dronecrypt}{IoD-Crypt{\xspace}}
\begin{document}

\title{IoD-Crypt: A Lightweight Cryptographic Framework  for Internet of Drones}

\author{Muslum~Ozgur~Ozmen,~Rouzbeh~Behnia,~Attila~A.~Yavuz,~\IEEEmembership{Member,~IEEE}
	% <-this % stops a space
\IEEEcompsocitemizethanks{\IEEEcompsocthanksitem   Muslum Ozgur Ozmen, Rouzbeh Behnia and Attila A. Yavuz are with the  Department of Computer Science and Engineering, University of South Florida, Tampa, FL,  USA. \protect\\
	E-mail: \{ozmen,behnia\}@mail.usf.edu, attilaayavuz@usf.edu
	\IEEEcompsocthanksitem Part of this work is completed when Muslum Ozgur Ozmen and Rouzbeh Behnia were with the Department of Electrical Engineering and Computer Science, Oregon State University, Corvallis, OR, USA.}
\thanks{}
}

\maketitle

\begin{abstract}

Internet of Drones (IoD) is expected to play a central role in many civilian and military applications, that require sensitive and mission-critical information to be processed. It is therefore vital to ensure the security and privacy of IoD. However, unlike traditional networks, IoD has a broader  attack surface and is highly energy-constrained, which hinder the direct adoption of standard cryptographic protocols for IoD. 

We propose an energy-efficient cryptographic framework (namely \Dronecrypt), which can potentially meet the requirements of battery-limited IoD. Specifically, \Dronecrypt~utilizes  special precomputation techniques and self-certified primitives to gain significant computation and communication efficiency over the standard public key cryptography (PKC)  suites. Our integrations and optimizations are broadly applicable to key exchange, digital signature and public key encryption schemes that  encompass  generic applications of PKC in IoD. We prove that \Dronecrypt~is secure in the random oracle model. We fully implemented \Dronecrypt~on two common drone processors, namely 8-bit AVR and 32-bit ARM, and conducted an in-depth energy analysis. Our experiments (on both platforms) showed that \Dronecrypt~offers up to 48$\times$ less energy consumption compared to standard techniques.  We have open-sourced our implementations~for wide adoption and public testing purposes.

\end{abstract}

\begin{IEEEkeywords}
	internet of drones, drone security, network security, public key cryptography, lightweight cryptography
\end{IEEEkeywords}

\IEEEpeerreviewmaketitle

\section{Introduction}\label{sec:Intro}

Internet of Drones (IoD), as an emerging mobile Internet of Things (IoT) system, is a layered network control architecture that is designed to control and coordinate unmanned aerial vehicles (UAVs or drones)~\cite{InternetOfDrones}. IoD has many applications including, but not limited to, military operations, package delivery, traffic control, environmental monitoring and disaster recovery~\cite{InternetOfDrones,Dronecrypt,IoD_SP,DroneUseCases1,Drone:IoTJournal:Planning,Drone:TMC:SecurePosition}. Due to the sensitive and strategic information involved in IoD, it is essential to guarantee their security and privacy. However,  due to the low computational power, the energy  and bandwidth limitations, this becomes  a highly challenging task~\cite{IoD_SP}. 

While there have been extensive studies on the optimization of security mechanisms on IoT, their direct adoption for IoD might not yield feasible and secure solutions. This has been further emphasized in a recent study that investigated the attacks (and their potential countermeasures) on IoT applications  \cite{IoT:Survey:2018}. In \cite{IoT:Survey:2018},  the authors point out that {\em the majority of the attack vectors could be mitigated if the existing security mechanisms and standards were properly implemented considering the requirements and characteristics of such systems.} Therefore, there is a critical need to harness the state-of-the-art security mechanisms and algorithmic optimizations to offer a viable cryptographic solution for IoD, with minimal impact to the battery lives of resource-constrained drones.

We outline the  security vulnerabilities and the  state-of-the-art solutions that are considered for aerial drones    as follows.

\subsection{Overview of Existing Cryptographic Approaches}\label{subsec:Related}

$\bullet$~\ul{\textit{Security Vulnerabilities and Solutions}:} In~\cite{Drone:ARDroneHack}, Pleban et al.  showed a successful hack and hijack  of an \emph{AR.Drone 2.0} (equipped with a 32-bit ARM processor) resulted from the lack of cryptographically secure  communication channels. Son et al. \cite{DroneSecurity::Yongdae} pointed out that most of the commodity (civilian) drone telemetry systems do not use cryptography to secure the communication. Thus, they proposed a fingerprinting method that may provide some authentication for drones.

There has been a lot of work on  the identification of the  security vulnerabilities of the current drone configurations~\cite{DroneSecurity::Yongdae2,DroneSecurity::Yongdae3,Drone:AVR1}. For instance, Shin et al. \cite{DroneSecurity::Yongdae2} showed how to extract frequency hopping sequence of FHSS-type drone controllers using a software defined radio. Son et al.~\cite{DroneSecurity::Yongdae3} demonstrated attacks that can cause drones to crash by exploiting the resonance frequency of MEMS gyroscopes.  Habibi et al.~\cite{Drone:AVR1} demonstrated  stealth attacks (with potential defense strategies combining software and hardware) that alter the sensors'  data  and even change the direction of the drone. Another vulnerability outlined in~\cite{Drone:Mavlink:Vulnerability} targets the MAVLink protocol and results in disabling the ongoing mission of the attacked drone. Garg et al. \cite{Drone:Kumar:CyberDetectionUAV} proposed a framework based on probabilistic data structures   that considers UAVs as intermediate aerial nodes to facilitate real-time analysis and cyber-threat detection mechanisms.

Recently, Lin et al.~\cite{IoD_SP} identified the differences of drone networks from traditional ones and the potential challenges towards securing drones. {\em Their study showed that the design and deployment of specifically tailored lightweight cryptographic protocols are essential for energy-constrained drones.}

$\bullet$~\ul{\textit{ Cryptographic Techniques}:} 
 The earlier studies \cite{Drones:RSA/AES:FPGA,Drones:Crypto:Weight} suggest the adoption of conventional  cryptographic techniques such as RSA and AES on   FPGAs to provide efficient solutions. Symmetric ciphers with white-box cryptography \cite{Drones:BertinoSeo:2016:Framework} were also considered in drone settings  to target the mitigations of  drone capturing attacks. Recently, certificateless cryptographic protocols were proposed for drone-based smart city applications~\cite{Drone:Elisa:Won:2017}. These protocols aim to minimize the certification overhead (e.g., transmission) that might be costly for drones. However, they might still be computationally expensive for energy-constrained drones since they include multiple elliptic curve (EC) scalar multiplications~\cite{Drone:Elisa:Won:2017}. Cheon et al.~\cite{Drone:Crypto:Homomorphic} proposed a linearly homomorphic authenticated encryption scheme to secure drone systems, but this scheme might also be too costly for resource-constrained drones, due to heavy homomorphic computations.

		There is a significant need for an {\em open-source, energy-efficient, and comprehensive cryptographic framework} which can vastly enhance the state-of-the-art crypto schemes.

\subsection{Our Contributions}\label{subsec:Contributions}

\begin{figure*}[!t]
	%\vspace{0mm}
	\centering
	\includegraphics[width=\linewidth]{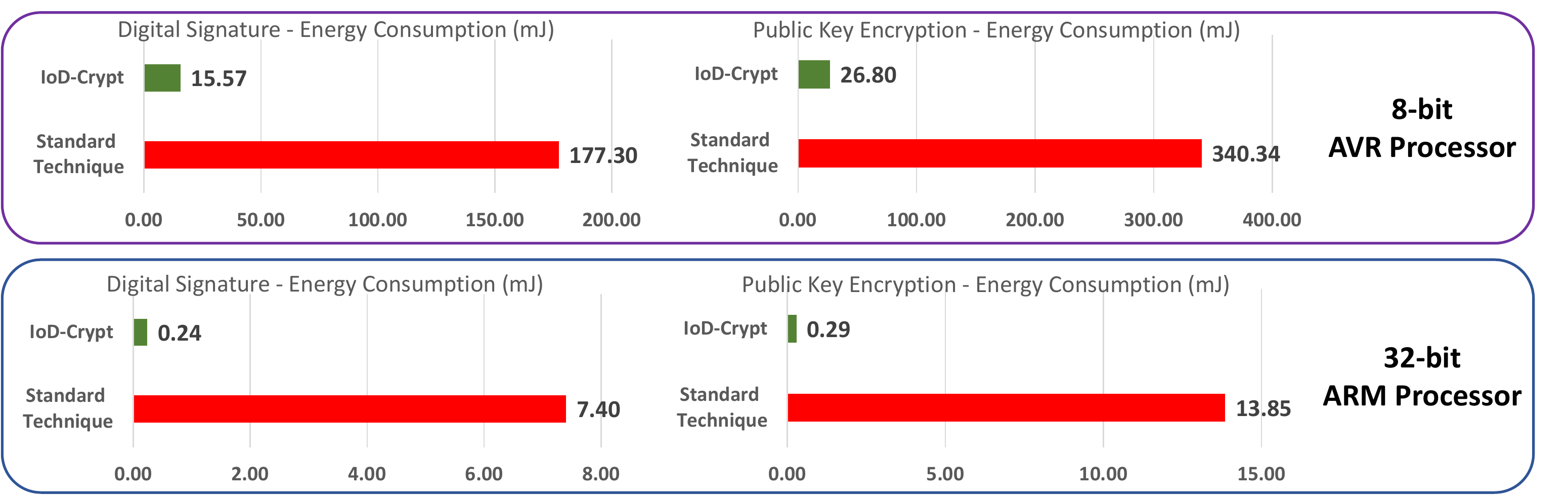}
	\caption{Energy consumption of \Dronecrypt~compared to standard PKI techniques (ECDSA, ECIES on standardized elliptic curves) on two common drone processors.}
	\vspace{-2mm}
	\label{fig:intro}
\end{figure*}

\noindent\fbox{\parbox{0.97\linewidth}{
	In this paper, we created a cryptographic framework that we refer to as {\em \Dronecrypt}, which can meet some of the stringent energy and bandwidth needs of IoD applications.
}} \\ %\vspace{-0.5mm}

Below, we give some desirable features of \Dronecrypt. \\ \vspace{-3mm} %

%Our framework is tailored to meet with various needs of drone networks. To achieve this, o
$\bullet$ \ul{\textit{A Comprehensive Energy-Efficient  Cryptographic Framework:}} \Dronecrypt~exploits the synergies among various cryptographic primitives such as  efficient elliptic curves and constant storage precomputation techniques while avoiding the costly certification overheads by employing a  self-certified cryptographic system. We apply our  optimizations to a wide-variety of standard suites such as a  key exchange, a digital signature, and a public key encryption. Our theoretical and experimental analysis, inlined with the requirements of IoD outlined in \cite{InternetOfDrones, IoD_SP},  confirms that \Dronecrypt~can meet the needs of IoD, specifically, in drone-to-drone and drone-to-infrastructure scenarios. \\ \vspace{-3mm}

$\bullet$ \ul{\textit{Improved Side-Channel Resiliency:}} Some of the optimization techniques we employed in our framework aim to speed up   EC scalar multiplication by converting it to EC additions~\cite{BPV:basepaper:1998, Ozmen_IOT_SP}. For instance, the digital signature ($\bpvsch$) and public key encryption ($\bpvies$) schemes that   are proposed as a part of  the \Dronecrypt~framework, do not require any EC scalar multiplication in their signing and encryption, respectively. Therefore, they achieve improved side-channel resiliency as compared to the  base schemes by avoiding  the attacks that target EC scalar multiplication \cite{SideChannelEC,SideChannel_25519}. \\ \vspace{-3mm}

$\bullet$ \ul{\textit{Cryptographic Energy Usage Analysis on Two Common Platforms:}} We identify two microprocessors, namely, 8-bit AVR~\cite{Drone:AVR1,Drone:AVR2,Drone:AVR3} and 32-bit ARM~\cite{bitcraze2016crazyflie,CrazySwarm:Drone}, from different families that are commonly used in small aerial drones (e.g., Crazyflie~\cite{bitcraze2016crazyflie}). We implemented our framework on both of these processors and assessed the energy consumption. Moreover, we tested the standard techniques and provided an in-depth comparison in \autoref{sec:Performance}. Our results showed that \Dronecrypt~offers up to $13$$\times$ and $48$$\times$ lower energy consumption on 8-bit AVR and 32-bit ARM, respectively. \\ \vspace{-3mm}

$\bullet$ \ul{\textit{Broad Applicability to Various Cryptographic Constructs:}} We identify various other (Elliptic Curve) Discrete Logarithm Problem (DLP) based schemes that can  benefit from our proposed methods  (see \autoref{subsec:ECDLP_impact}). We discuss how \Dronecrypt~can  enhance the performance of such schemes. \\ \vspace{-3mm}

$\bullet$ \ul{\textit{Open-Source Cryptographic Framework for Wide Adoption:}} We open-sourced our optimized framework for wide adoption and testing purposes at the link below.

\vspace{-.5mm}
\begin{center}
	\fbox{\url{https://github.com/ozgurozmen/IoD-Crypt}}
\end{center} 
\vspace{-.5mm}
$\bullet$ \ul{\textit{Potential   Applications to IoTs:}}  The techniques used in \Dronecrypt~can be useful for other IoT applications that require energy efficiency. Specifically, any IoT application relying on energy-limited devices (e.g., medical implantables, wireless sensors in smart cities/homes) can  benefit from \Dronecrypt.  Moreover, 8-bit AVR and 32-bit ARM processors, on which we tested \Dronecrypt, are not only common in drones but also in various other  IoT applications~\cite{Drone:ECC_IoT:Huang,IoT:Journal:ARM_AVR}. \\ \vspace{-3mm}

\noindent \textbf{Differences between this article and its preliminary versions in~\cite{Dronecrypt,Ozmen_IOT_SP}:} In this article, we harness the proposed techniques in two preliminary works to propose an integrated framework, \Dronecrypt. We provide improved implementations on drone microcontrollers that includes algorithmic optimizations. (i) We improve our framework, Dronecrypt, proposed in~\cite{Dronecrypt} to further enhance its efficiency and energy consumption, with the techniques in~\cite{Ozmen_IOT_SP}. (ii) In Dronecrypt, only 32-bit ARM processor was considered and in~\cite{Ozmen_IOT_SP}, only 8-bit AVR processor was considered. Whereas, in this version, we implement our improved framework on both 8-bit AVR drone processor~\cite{Drone:AVR1,Drone:AVR2,Drone:AVR3}, and 32-bit ARM processor (that Crazyflie 2.0 is equipped with~\cite{bitcraze2016crazyflie,CrazySwarm:Drone}). We evaluate our improved framework in terms of efficiency, energy consumption, and storage overhead. (iii) We provide the security proof of self-certified keys in~\cite{SelfCertifiedGroupECDH} for the first time in this paper, that were proposed/adopted without a security proof in~\cite{Ozmen_IOT_SP,SelfCertifiedGroupECDH}. (iv) We provide a formal security analysis of \Dronecrypt by capturing the security properties of its underlying cryptographic primitives. (v) We propose new parameter sets for our optimizations that offers higher security. (vi) We discuss  how \Dronecrypt~meets the well-established requirements of IoD networks presented in \cite{InternetOfDrones,IoD_SP} and elaborate on the broader impacts of our optimizations and proposed framework, specifically, we highlight some other EC-based schemes that can benefit from our optimization. %\\ \vspace{-3mm}

%(ii) In Dronecrypt, only 32-bit ARM processors were considered, whereas, in this version, we implement our improved framework on both 8-bit AVR drone processors~\cite{Drone:AVR1,Drone:AVR2,Drone:AVR3}, and 32-bit ARM processors (that Crazyflie 2.0 is equipped with~\cite{bitcraze2016crazyflie,CrazySwarm:Drone}). We evaluate our improved framework in terms of efficiency, energy consumption, and storage overhead. (iii) We provide the security proof of self-certified keys in~\cite{SelfCertifiedGroupECDH} for the first time in this paper. (iv) We provide a formal security analysis of the new framework (\Dronecrypt) by capturing the security properties of its underlying cryptographic primitives. (v) We propose new parameter sets for our optimizations that offers higher security. (vi) We discuss  how \Dronecrypt~meets the well-established requirements of IoD networks presented in \cite{InternetOfDrones,IoD_SP} and elaborate on the broader impacts of our optimizations and proposed framework, specifically, we highlight some other EC-based schemes that can benefit from our optimization. %\\ \vspace{-3mm}

%TODO: Update the normal AQ codes (not ephemeral) as in prelim here!
\section{Building Blocks}\label{sec:Prelim}

We first outline our notation in Table~\ref{tab:Notation}.  
\begin{table}[h]
	\centering
%	\vspace{-1mm}
	\caption{Notation followed to describe schemes.} \label{tab:Notation}
	%		\resizebox{\textwidth}{!}{
%	\footnotesize
%	\vspace{-1mm}
	\begin{threeparttable}
		\begin{tabular}{| c | c |}
			\hline
			$F_{p}$ & Finite Field \\ \hline
			$\mathbf{G}$ & Generator Group Point \\ \hline
			$N$ & A large prime \\ \hline
			$\langle x,\mathbf{U}\rangle$ & Private/Public key pair \\ \hline
			$\langle d,\mathbf{D}\rangle$ & System Wide Private/Public key pair    \\ \hline
			$\Gamma$ & Precomputation Table \\ \hline
			$\mathcal{E}_{k}$ & IND-CPA Encryption via key $k$ \\ \hline
			$\mathcal{D}_{k}$ & IND-CPA Decryption via key $k$ \\ \hline
			$\times$ & Elliptic Curve Scalar Multiplication (Emul) \\ \hline
			$\cdot$ & Multiplication in $F_q$ \\ \hline
			$\texttt{KDF}$ & Key Derivation Function \\ \hline
			$H$ & Cryptographic Hash Function $H: \{0,1\}^* \rightarrow \mathbb{Z}_N $ \\ \hline
		\end{tabular}
 EC points are shown in capital  bold letters  (e.g., $ \mathbf{B} $)
	\end{threeparttable}
%	\vspace{-5mm}
\end{table}

\subsubsection{FourQ Curve~\cite{FourQBase}} FourQ is a special EC that is defined by the complete twisted Edwards equation~\cite{Bernstein::edwards_equation} $ \E/F_{p^2} : -x^2+y^2 = 1 +dx^2y^2 $. FourQ is known to be one of the fastest elliptic curves that admits $128$-bit security level~\cite{FourQBase}. Moreover, with extended twisted Edwards coordinates, FourQ offers the fastest EC addition algorithms~\cite{FourQBase}, that is extensively used in our optimizations. All of our schemes are realized on FourQ.

\subsubsection{Boyko-Peinado-Venkatesan (BPV) Generator~\cite{BPV:basepaper:1998}} BPV generator is a precomputation technique that converts   an EC scalar multiplication to multiple EC additions with the cost of a small constant-size table storage. The security of BPV on EC discrete logarithm problem  (DLP) setting is well-investigated~\cite{BPV:Ateniese:Journal:ACMTransEmbeddedSys:2017,BPV:ECC:Coron2001} and depends on the affine hidden subset sum problem. The offline (key generation) and online algorithms of BPV are described in Algorithm~\autoref{alg:BPV}.

\begin{algorithm}[h!]
	\caption{BPV Generator~\cite{BPV:basepaper:1998}}\label{alg:BPV}
	\hspace{5pt}
	\begin{algorithmic}[1]
		\Statex $\underline{(\Gamma,v,k) \as \BPVOff(1^{\kappa})}$: %Given $1^{\kappa}$ as the input
		\vspace{3pt}
		\State Generate \BPV~parameters $(v,k)$, where $k$ and $v$ are the number of pairs to be precomputed and the number of elements to be randomly selected out of $k$ pairs, respectively, for $2<v<k$.
		\State $r'_i\Rn,$~~ $\mathbf{R'}_i \as {r'_i} \times \mathbf{G}$, $i=0,\ldots, k-1$.
		\State Set precomputation table $\Gamma=\{r'_i,\mathbf{R'}_i\}_{i=0}^{k-1}$.
	\end{algorithmic}
	\algrule
	
	\begin{algorithmic}[1]
		\Statex $\underline{(r,\mathbf{R}) \as \BPVOn(\Gamma,v,k,\params)}$:
		\vspace{3pt}
		\State Generate a random set $S \subset [0,k-1]$, where $|S| = v$.
		\State $r \as \sum_{i \in S}^{} r'_i \bmod N$, $\mathbf{R} \as \sum_{i \in S}^{} \mathbf{R'}_i$.
	\end{algorithmic}
	
\end{algorithm}

\subsubsection{Arazi-Qi (AQ) Self-Certified Key Exchange~\cite{SelfCertifiedGroupECDH}} AQ self-certified key exchange minimizes the certificate transmission and verification overhead. In Algorithm~\autoref{alg:AQ}, we give key generation and shared secret algorithms where the keys are generated and distributed by a key generation center (KGC). 

\begin{algorithm}[h!]
	\caption{AQ Self-Certified (SC) Scheme~\cite{SelfCertifiedGroupECDH} }\label{alg:AQ}
	\hspace{5pt}
	\begin{algorithmic}[1]
		\Statex \underline{$(d,\mathbf{D}) \leftarrow \texttt{AQ.Setup}(1^{\kappa})$} (offline)  
		\vspace{3pt}
		\State $d \Rn$. $\mathbf{D} \xleftarrow{} d\times\mathbf{G}$
		%		\State Node A $\xleftarrow{}$ ($x_a,\mathbf{U}_a$).
	\end{algorithmic}
	\algrule
	
	\begin{algorithmic}[1]
		\Statex \underline{$(x_a,\mathbf{U}_a) \leftarrow \texttt{AQ.Kg}(1^{\kappa}, ID_a)$} (offline)  
		\vspace{3pt}
		\State $b_a \Rn$, $\mathbf{U}_a \xleftarrow{} b_a\times\mathbf{G}$.
		\State $x_a \leftarrow [H(ID_a, \mathbf{U}_a)\cdot b_a + d]$ %and repeat 1-2 for node B.
	\end{algorithmic}
	\algrule
	%	\newpage
	\begin{algorithmic}[1]
		\Statex \underline{$(\mathbf{K}_{ab}, \mathbf{K}_{ba}) \leftarrow \texttt{AQ.SharedSecret}(ID_a, \mathbf{U}_a,ID_b, \mathbf{U}_b)$ }
		\vspace{-6mm}
		
		\begin{figure}[H]
			\centering
			\begin{tikzpicture}
			\matrix (m)[matrix of nodes, column  sep=.5cm,row  sep=6mm, nodes={draw=none, anchor=center,text depth=0pt} ]{
				Node A & & Node B\\[-5mm]
				& Send $(ID_a, \mathbf{U}_a)$  & \\[-5mm]
				& &   \\[-7mm]
				& Send $(ID_b, \mathbf{U}_b)$  & \\[-5mm]
				& &  \\
			};
			
			\draw[shorten <=-.5cm,shorten >=-.5cm] (m-1-1.south east)--(m-1-1.south west);
			\draw[shorten <=-.5cm,shorten >=-.5cm] (m-1-3.south east)--(m-1-3.south west);
			\draw[shorten <=0cm,shorten >=0cm,-latex] (m-2-2.south west)--(m-2-2.south east);
			\draw[shorten <=0cm,shorten >=0cm,-latex] (m-4-2.south east)--(m-4-2.south west);
			\end{tikzpicture}
			
		\end{figure}
		\vspace{-4mm}
		\noindent Node A: $\mathbf{K}_{ab} = x_a\times [H(ID_b||\mathbf{U}_b)\times \mathbf{U}_b  + \mathbf{D}] $
		
		\noindent Node B: $\mathbf{K}_{ba} = x_b\times [H(ID_a||\mathbf{U}_a)\times \mathbf{U}_a  + \mathbf{D}] $
	\end{algorithmic}
\end{algorithm}

Nodes store the value $x_a \times \mathbf{D}$ so that the shared secret is calculated with a single EC scalar multiplication. One can also apply a technique presented in~\cite{SelfCertifiedGroupECDH} to ensure that, if required, nodes can  obtain AQ keys without exposing their private keys to the KGC. An ephemeral key exchange algorithm is also proposed using AQ keys, namely, \texttt{AQ-Hang}~\cite{Hang:2011:SPN:1998412.1998436}. In \texttt{AQ-Hang}, for each key exchange, a fresh ECDH key is calculated and added to the AQ shared secret to calculate ephemeral keys that are forward secure and self-certified.

\section{Models}\label{sec:model}

\subsection{System Model}\label{subsec:system_model}

Self-certified cryptography requires a trusted key generation center to generate and distribute the keys to the entities in the system. Therefore, we assume that there is a trusted authority that can compute and distribute the self-certified keys to the drones. This is a feasible assumption as Federal Aviation Administration (FAA) requires that all drones between $0.6 $  to $55$ lb should be registered with the U.S. Department of Transportation~\cite{InternetOfDrones}. For instance, as the trusted authority, U.S. Department of Transportation (or FAA) can issue the keys to the drones. 

\begin{figure}[!t]
	\centering
	\includegraphics[width=.80\linewidth]{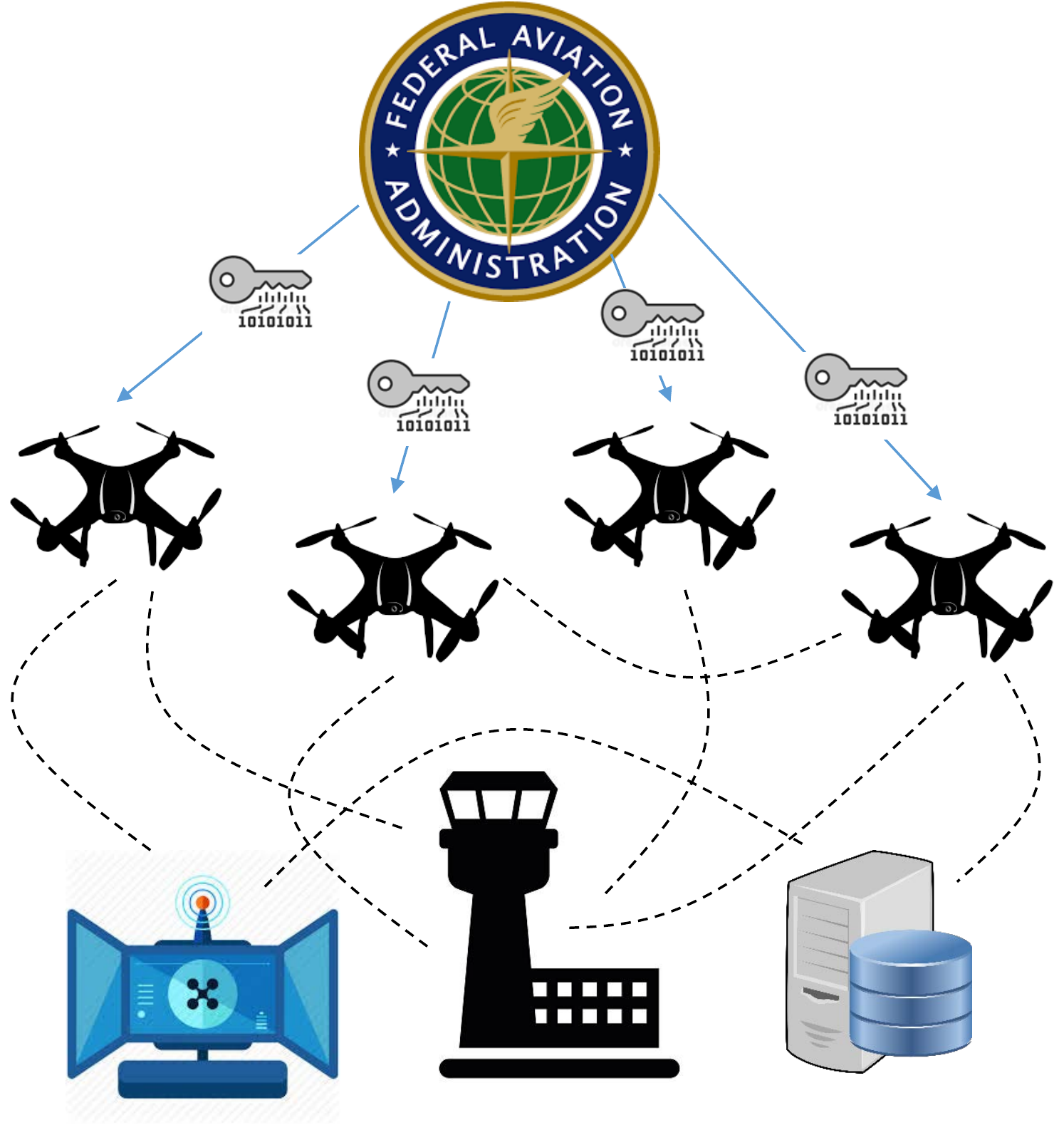}
	\caption{Our system model.}
	\label{fig:system}
\end{figure}

As depicted in \autoref{fig:system}, our main system components are the drones, and there are also recipients that communicate with the drones. These recipients may include entities such as Zone Service Providers (ZSPs), control stations or servers. For example, ZSPs provide navigation information and coordinate drones in a predetermined area, control stations can be used to send commands to non-autonomous drones, and servers can be used to store the sensors' data received from the drones~\cite{InternetOfDrones,IoD_SP,DroneUseCases1}. The communication among these entities and drones occurs over a (insecure) broadcast channel and therefore it should be secured. Note that IoD networks are energy constrained, the components are mobile and the communication range is limited~\cite{InternetOfDrones,IoD_SP}. 

\subsection{Security Model}\label{subsec:security_model}

We assume that only the KGC is trusted.  Following  \cite{Canetti2001}, we consider a probabilistic polynomial time (PPT) adversary \A  which is able  to tap the  communications between  parties and intercept/modify messages.  \A is also able to  request the key pairs of any user  $ ID $ in the system via the $ ( \mathbf{U}_{ID}, x_{ID}) \gets \texttt{KeyQry}(ID) $~oracle.   When the key pair of a user is exposed via the \keyRequest~oracle the user will be marked as  \emph{corrupted}. Since the keys are self-certified, we assume that the adversary is able to verify the validity of the keys via the \keyVer$ (\cdot) $~oracle. 

Inspired by the work of Canetti and Krawczyk \cite{Canetti2001}, we define the security of a key generation scheme in the following definition.  To capture a real-world scenario, given the system-wide public key, we assume that \A is able to observe a polynomially-bounded number of key pairs. \A is also able to verify the validity of any key pair via the \keyVer$ (\cdot) $ oracle. At some point,  \A~outputs a key pair for the target user $ T $. \A wins the experiment if the key pair is valid and  $ T $ was never queried to \keyRequest~oracle. 
\begin{def1}\label{def:KG-sec}
	A self-certified key generation scheme  $\texttt{SC-KG} = (\texttt{Setup,Kg}) $ is secure if  \A has a negligible advantage in the following experiment    $ \mathit{Expt}^{\texttt{KG-Security}}_ {\mathcal{A}} $ with a challenger $ \mathcal{C} $.
	\begin{itemize}\normalfont
		\item [--] $ (\params)\leftarrow \texttt{SC-KG.Setup}(1^\kappa) $
		\item[--] $  ( \mathbf{U}_T, x_T)\leftarrow \Adv^{ \keyRequest,\keyVer(\cdot)} (\params) $
		\item[--]  If  $ 1 \gets  \keyVer(\mathbf{U}_T,x_T) $,  return 1. Else return 0. 
	\end{itemize}
	
	\A wins if  user $ T$  was never queried to the \keyRequest~oracle. The  advantage   $  \mathit{Adv}^{\texttt{KG-Security}}_ {\mathcal{A}}$  of \A is defined as  $ { \Pr[\mathit{Expt}^{\texttt{KG-Security}}_ {\mathcal{A}} =1]   }$.
\end{def1}

We define the notion of existential unforgeability under chosen message attack (EU-CMA)   for a signature scheme $  \texttt{SGN} = (\texttt{Kg,Sig,Ver}) $   through the following definition.

\begin{def1}\label{def:EUCMA} 
 Existentially Unforgeability under Chosen Message Attack (EU-CMA)  experiment $ \mathit{Expt}^{\textit{EU-CMA}}_{\mathtt{SGN}} $   for  a signature scheme $  \texttt{SGN} = (\texttt{Kg,Sig,Ver}) $   is defined as follows.
	\begin{itemize}\normalfont
		\item [--] $ (\sk,pk)\leftarrow \texttt{SGN.Kg}(1^\kappa) $
		\item[--] $ (m^*, \sigma^*)\leftarrow \Adv^{\texttt{SGN.Sig}(\cdot)} (pk) $

	\end{itemize}
	
	\A wins the above experiment if  $  1 \leftarrow {\texttt{SGN.Ver}(m^*, \sigma^*, pk)} $ and $ m^* $ was not queried to  $ \texttt{SGN.Sig}(\cdot) $ oracle.  The EMU-CMA advantage $ \mathit{Adv}^{\textit{EU-CMA}}_{\texttt{SGN}} $  of $  \Adv  $  is defined as   $\Pr[ \mathit{Expt}^{\textit{EU-CMA}}_{\texttt{SGN}}= 1]$
\end{def1}

We define the notion of indistinguishability under chosen message attack (IND-CPA) for  encryption scheme $   {\Sigma} = (\texttt{Kg,Enc,Dec}) $   through the following definition.

\begin{def1}\label{def:INDCPA} 
	IND-CPA  experiment $ \mathit{Expt}^{\textit{INC-CPA}}_{\Sigma} $   for  an encryption  scheme $   {\Sigma} = (\texttt{Kg,Enc,Dec}) $   between the adversary \A~and a challenger \C~is defined as follows.
	\begin{itemize}\normalfont
		\item [--] $ (\sk,pk)\leftarrow \Sigma.\texttt{Kg}(1^\kappa) $
		\item[--] $ (m_0, m_1)\leftarrow \Adv^{\Sigma.\texttt{Enc}(\cdot)} (pk) $
		\item [--] \C~flips a coin $ b\gets\{0,1\} $,  computes $ c_b \gets \Sigma.\texttt{Enc}(m_b,pk)$ and returns $ c_b  $ to $ \mathcal{A}. $
		\item [--]  $b' \leftarrow \Adv^{\Sigma.\texttt{Enc}(\cdot)} (pk) $
	\end{itemize}
	
	\A wins the above experiment if   $ b=b' $.   The IND-CPA advantage $ \mathit{Adv}^{\textit{IND-CPA}}_{\Sigma} $  of $  \Adv  $  is defined as   $    { \Pr [b=b']  \leq \frac{1}{2} + \epsilon }$. 
\end{def1}

\section{Proposed Framework}\label{sec:Proposed}

\noindent \textbf{Design  Rationale:} Our main idea is as follows.

(i) The transmission and verification of certificates introduce a significant communication and computation burden which might be expensive for many IoD applications. Moreover, drones are required to be registered with  central authorities (e.g., FAA). We observed that self-certified cryptographic keys as in~\cite{SelfCertifiedGroupECDH} can be used to bridge this gap and enhance the security and performance of IoD. Specifically, we integrate self-certified keys into Schnorr signatures~\cite{Schnorr91}  and Elliptic Curve Integrated Encryption Scheme (ECIES)~\cite{Pointcheval00psec3}  to achieve efficient authentication   and  confidentiality tools. We    refer to such improved schemes as $\bpvsch$~and $\bpvies$, respectively. Therefore, our approach eliminates the burden of certificates and  harnesses a central trusted authority for the initial key distribution that fits  IoD in compliance with the FAA requirements.

(ii) We alleviate  the computational overhead of our schemes by harnessing special precomputation techniques such as the BPV precomputation technique (see Algorithm \ref{alg:BPV}). We adopt a variation of the BPV technique, the Designated BPV (i.e., DBPV)~\cite{Ozmen_IOT_SP}, which permits an efficient use of BPV in ECIES. Finally, we realize all of our improved schemes on FourQ curve~\cite{FourQ}, which is one of the most computationally efficient ECs. Specifically, the efficiency of EC additions in FourQ complements BPV/DBPV techniques.

(iii) The AQ self-certified system \cite{SelfCertifiedGroupECDH} has not shown to be provably secure. We fill this research gap by providing a formal security proof (in the random oracle model) for the AQ self-certified keys. We then analyze the security of our proposed schemes by capturing the integration of self-certified keys and DBPV into Schnorr and ECIES. 
Our proposed framework consists of a signature scheme and a CPA secure encryption scheme to achieve authentication and confidentiality, respectively.

\subsubsection{$\bpvsch$}We adopt the BPV technique in the Schnorr signature scheme to speed up the EC scalar multiplication required in Schnorr. Moreover, we use AQ public/private key pair in our Schnorr algorithms to  avoid the burden of communicating certificates by achieving self-certification. In PKI-based schemes, before the signature verification, the public key of the signer (i.e., via its certificate) should be verified.  Note that $[H(ID_b||\mathbf{U}_b)\times \mathbf{U}_b  + \mathbf{D}]$ part in the verification step is equivalent to verifying a certificate without any extra communication cost. Moreover, it can be calculated and stored before the online steps. Our scheme is provided in Algorithm~\autoref{alg:BPV-Schnorr}.

\begin{algorithm}[h!]
	\caption{$\bpvsch$ with AQ Keys}\label{alg:BPV-Schnorr}
	\hspace{5pt}
	\begin{algorithmic}[1]
		\Statex $\underline{(\Gamma,x,\mathbf{U}) \leftarrow \bpvschkg(1^{\kappa})}$: %Given $1^{\kappa}$ as the input
		\vspace{3pt}
		\State KGC generates its keys $d \Rn$, $\mathbf{D} \xleftarrow{} d\times\mathbf{G}$
		\State $b \Rn$, $\mathbf{U} \xleftarrow{} b\times\mathbf{G}$.
		\State $x \leftarrow [H(ID, \mathbf{U})\cdot b + d]$. 
		\State Generate $(v,k)$ as in $\BPVOff(\cdot)$ Step 1%, Algorithm~\autoref{alg:BPV}
		\State $r'_i\Rn,$~~ $\mathbf{R'}_i \as {r'_i} \times \mathbf{G}$, $i=0,\ldots, k-1$.
		\State Set precomputation table $\Gamma=\{r'_i,\mathbf{R'}_i\}_{i=0}^{k-1}$.
	\end{algorithmic}
	\algrule
	
	\begin{algorithmic}[1]
		\Statex $\underline{(s,e)\leftarrow \bpvschsig(m,x,\Gamma)}$:
		\vspace{3pt}
		\State Generate a random set $S \subset [0,k-1]$, where $|S| = v$.
		\State $r \as \sum_{i \in S}^{} r'_i \bmod N$, $\mathbf{R} \as \sum_{i \in S}^{} \mathbf{R'}_i$.
		\State $e\as H(m||\mathbf{R}),~~s\as (r-e\cdot x) \bmod N$ 
	\end{algorithmic}
	\algrule
	
	\begin{algorithmic}[1]
		\Statex $\underline{b\leftarrow \bpvschver(m,\langle s,e \rangle,\mathbf{U}, ID)}$: 
		\vspace{3pt}
		\State $\mathbf{R'}\as  e \times [H(ID||\mathbf{U})\times \mathbf{U}  + \mathbf{D}] + s \times \mathbf{G}$
		\State If $e=H(m||\mathbf{R'})$ then  set $b=1$ as {\em valid}, else $b=0$ .
	\end{algorithmic}
\end{algorithm}

\subsubsection{$\bpvies$}ECIES is a public key encryption scheme also included in TinyECC~\cite{TinyECCLiu:2008}, that offers forward-secure authentication and encryption. In the standard ECIES scheme, two EC scalar multiplications are required, one to generate ephemeral key pair, and one to generate a one-time key. We also adopt the Designated BPV technique that allows to speed up both of these EC scalar multiplications in ECIES by replacing them with EC additions~\cite{Ozmen_IOT_SP}. \\ \vspace{-3mm}

The BPV precomputation technique is extended for ``Designated'' elliptic curve points, where not only the EC scalar multiplication over the generator ($\mathbf{G}$) but also over other precomputed  curve points are stored~\cite{Ozmen_IOT_SP}. This significantly benefits the  public key encryption schemes (e.g., ECIES~\cite{Pointcheval00psec3}) where an EC scalar multiplication is required over the receiver's public key. Although DBPV eliminates EC scalar multiplications, the caveat is the small storage overhead. However, public key encryption schemes are usually suitable for applications where multiple entities (e.g., drones) send messages/reports to a single command center/base station (e.g., ZSPs) using their public key. This allows easy adoption of DBPV to public key encryption schemes (especially in the  IoD networks) since the storage overhead introduced would be tolerable, which is confirmed by our experiments (see \autoref{sec:Performance}).

For the  ECIES scheme, as   in Algorithm~\autoref{alg:DBPV-ECIES}, we generate the designated precomputation table over $[H(ID||\mathbf{U})\times \mathbf{U}  + \mathbf{D}]$, that is based on the public key of the receiver.

\begin{algorithm}[h!]
	\caption{$\bpvies$ with AQ Keys}\label{alg:DBPV-ECIES}
	\hspace{5pt}
	\begin{algorithmic}[1]
		\Statex $\underline{(\Gamma,x,\mathbf{U}) \leftarrow \bpvieskg(1^{\kappa})}$:  %Given $1^{\kappa}$ as the input
		\vspace{3pt}
		\State KGC generates its keys $d \Rn$, $\mathbf{D} \xleftarrow{} d\times\mathbf{G}$
		\State $b \Rn$, $\mathbf{U} \xleftarrow{} b\times\mathbf{G}$.
		\State $x \leftarrow [H(ID, \mathbf{U})\cdot b + d]$. 
		\State Generate  $(v,k)$ as in $\BPVOff(\cdot)$ Step 1
		\State $r'_i\Rn,$~ $\mathbf{R'}_i \as {r'_i} \times \mathbf{G}$,~ $\mathbf{S'}_i \as {r'_i} \times \mathbf{[H(ID||\mathbf{U})\times \mathbf{U}  + \mathbf{D}]}$, $i=0,\ldots, k-1$.
		\State Set precomputation table $\Gamma=\{r'_i,\mathbf{R'}_i, \mathbf{S'}_i\}_{i=0}^{k-1}$.
	\end{algorithmic}
	\algrule
	
	\begin{algorithmic}[1]
		\Statex $\underline{(c,d,\mathbf{R})\leftarrow \bpviessig (m,\mathbf{U}, ID, \Gamma)}$:
		\vspace{3pt}
		\State Generate a random set $S \subset [0,k-1]$, where $|S| = v$.
		\State $r \as \sum_{i \in S}^{} r'_i \bmod N$, $\mathbf{R} \as \sum_{i \in S}^{} \mathbf{R'}_i$,  $\mathbf{S} \as \sum_{i \in S}^{} \mathbf{S'}_i$.
		\State $(k_{enc}, k_{MAC}) \leftarrow \texttt{KDF}(\mathbf{S})$
		\State $c \leftarrow \mathcal{E}_{k_{enc}}(m)$
		\State $d \leftarrow \texttt{MAC}_{k_{MAC}}(c)$
	\end{algorithmic}
	\algrule
	
	\begin{algorithmic}[1]
		\Statex $\underline{m\leftarrow \bpviesver (x,\langle c,d,\mathbf{R} \rangle)}$:
		\vspace{3pt}
		\State $\mathbf{S'}\as y \times \mathbf{R}$
		\State $(k_{enc}, k_{MAC}) \leftarrow \texttt{KDF}(\mathbf{S'})$
		\State If $d \neq  \texttt{MAC}_{k_{MAC}}(c)$ return {\em invalid}
		\State $m \as \mathcal{D}_{k_{enc}}(c)$
	\end{algorithmic}
\end{algorithm}

We also complement our optimized public-key primitives by adopting lightweight symmetric ciphers. We consider Chacha stream cipher and Poly1305 authenticator~\cite{rfc7539}, due to their fast operations and well-studied security~\cite{CHACHAPOLYSecurity}.

\subsection{Broad Application to IoD and EC-based Schemes}\label{subsec:ECDLP_impact}

We discuss how \Dronecrypt~can meet some of the needs of IoD networks as outlined in~\cite{InternetOfDrones, IoD_SP}. First and foremost, the energy efficiency of \Dronecrypt~makes it a suitable choice for IoD. Our experiments showed that \Dronecrypt~offers high energy efficiency on small drones that are equipped with 8-bit and 32-bit processors (see~\autoref{sec:Performance}). Since IoD networks consist of  energy-constrained small drones~\cite{IoD_SP}, \Dronecrypt~techniques are preferable over the standard techniques. Moreover, since  the IoD networks usually use broadcasting for message delivery and may carry highly sensitive information, public key cryptographic tools may be useful due to their scalability, non-repudiation and public verifiability.

Moreover, the proposed techniques in \Dronecrypt~framework can benefit various other ECDLP based schemes, that can be useful for securing IoD networks. For instance, in~\cite{Drone:Elisa:Won:2017}, certificateless key encapsulation and encryption techniques are proposed to secure drone communications in smart cities. The EC scalar multiplications in the certificateless hybrid encryption scheme in~\cite{Drone:Elisa:Won:2017} can be accelerated with DBPV. Similarly, it can also benefit from the adoption of FourQ curve, especially when considered with the DBPV due to its optimized EC addition. In~\cite{Drone:Signature_Huang,IoT:Signcryption}, efficient digital signature and signcryption schemes were proposed. Although these schemes include bilinear pairing and therefore, cannot adopt FourQ curve, they can benefit from \Dronecrypt~optimizations. In~\cite{Drone:Signature_Huang}, signer needs to perform an EC scalar multiplication that can be accelerated with  BPV, and in~\cite{IoT:Signcryption}, at the signcryption algorithm, DBPV can be utilized to speed up a couple of EC scalar multiplications that are necessary to generate an ephemeral key.

\subsection{Limitations and Potential Remedies} \label{secx:Limitations}
High energy and computational efficiency of \Dronecrypt~comes with a trade-off due to the adopted optimization methods. First, the precomputation technique used in \Dronecrypt~requires storing a constant-size table at the signer/sender. However, we show that with the correct parameter choices, it is even possible to store this table in highly resource-constrained 8-bit AVR processors. Moreover, \Dronecrypt~requires a trusted authority (e.g., KGC) to compute and deliver keys to the drones, to remove the certificate transmission and verification overhead. As explained in our system model in~\autoref{subsec:system_model}, since drones should be registered with authorities (e.g., FAA or U.S. Department of Transportation), we believe this is a plausible assumption for the envisioned IoD applications.

\section{Security Analysis}\label{sec:Security}

\begin{theorem}\label{thm:KG-sec}
	If an adversary \A~wins the experiment in Definition \autoref{def:KG-sec}  with non-negligible probability after making  $ q_k $ key queries and $ q_h $ hash queries,  then one can build another algorithm \C~that runs \A as a subroutine and can solve the Diffie-Hellman Problem with a non-negligible probability.
	
\end{theorem}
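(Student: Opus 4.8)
The plan is to transform an adversary $\mathcal{A}$ that wins the experiment in Definition~\autoref{def:KG-sec} with probability $\epsilon$ after $q_k$ key and $q_h$ hash queries into an algorithm $\mathcal{C}$ that, on a Diffie-Hellman instance $(\mathbf{G},\mathbf{A}=a\times\mathbf{G},\mathbf{B}=b\times\mathbf{G})$, outputs $ab\times\mathbf{G}$. The algebraic fact driving the reduction is that a pair $(\mathbf{U},x)$ is accepted by $\keyVer$ under identity $ID$ precisely when $x\times\mathbf{G}=H(ID,\mathbf{U})\times\mathbf{U}+\mathbf{D}$, \ie when the scalars $d=\log_{\mathbf{G}}\mathbf{D}$ and $u=\log_{\mathbf{G}}\mathbf{U}$ satisfy the \emph{single} linear relation $x\equiv H(ID,\mathbf{U})\cdot u+d\pmod N$. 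So $\mathcal{C}$ plants its challenge by declaring $\mathbf{D}:=\mathbf{A}$ in \params; if it can later make $\mathcal{A}$ output two valid pairs for the \emph{same} $(ID_T,\mathbf{U}_T)$ under two \emph{distinct} values of $H(ID_T,\mathbf{U}_T)$, it obtains two such relations, solves them for $d$, and thus recovers $a=d$ --- in fact it recovers the discrete logarithm $\log_{\mathbf{G}}\mathbf{D}$ outright, a fortiori solving Diffie-Hellman by returning $a\times\mathbf{B}=ab\times\mathbf{G}$.

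Second, $\mathcal{C}$ must answer $\mathcal{A}$'s oracles without knowing $d=a$. Hash queries are served by lazy sampling, with programming reserved for key queries: to answer $(\mathbf{U}_{ID},x_{ID})\gets\texttt{KeyQry}(ID)$, $\mathcal{C}$ picks $x_{ID},h_{ID}$ uniformly from $\mathbb{Z}_N^{*}$, sets $\mathbf{U}_{ID}:=h_{ID}^{-1}\times(x_{ID}\times\mathbf{G}-\mathbf{D})$, defines $H(ID,\mathbf{U}_{ID}):=h_{ID}$, returns $(\mathbf{U}_{ID},x_{ID})$, and marks $ID$ corrupted. By construction $x_{ID}\times\mathbf{G}=h_{ID}\times\mathbf{U}_{ID}+\mathbf{D}$, so the reply passes $\keyVer$, and since $\mathbf{U}_{ID}$ is uniform in the group its distribution is statistically close to the real one; the only bad event is that $H$ was already defined at $(ID,\mathbf{U}_{ID})$, which happens with probability at most $(q_h+q_k)/N$. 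The \keyVer$(\cdot)$ oracle is answered directly from \params via the verification relation, exactly as Definition~\autoref{def:KG-sec} allows.

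Third, I apply a standard rewinding (forking) argument. One may assume $\mathcal{A}$ queries $H(ID_T,\mathbf{U}_T)$ before halting: otherwise the verification relation pins $H(ID_T,\mathbf{U}_T)$ to a unique value (barring the negligible event that $\mathbf{U}_T$ is the neutral element), so $\mathcal{A}$ wins with probability at most $1/N$. A first run of $\mathcal{A}$, together with $\mathcal{C}$'s transcript, identifies the index $i$ of that query and yields a valid $(\mathbf{U}_T,x_T)$ with $x_T\times\mathbf{G}=h_T\times\mathbf{U}_T+\mathbf{D}$, where $h_T$ is $\mathcal{C}$'s $i$-th reply. $\mathcal{C}$ then rewinds $\mathcal{A}$ to just before that query, keeping $\mathcal{A}$'s coins and the replies $h_1,\dots,h_{i-1}$ --- hence the point $\mathbf{U}_T$, to which $\mathcal{A}$ has already committed --- fixed, and resamples $h_i',h_{i+1},\dots$ afresh. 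By the forking lemma, with probability $\Omega(\epsilon^{2}/q_h)$ (after subtracting the $(q_h+q_k)/N$ simulation loss) the second run is again successful on index $i$ with $h_T'\neq h_T$, giving $x_T'\times\mathbf{G}=h_T'\times\mathbf{U}_T+\mathbf{D}$. Subtracting the two relations gives $u_T\equiv(x_T-x_T')\cdot(h_T-h_T')^{-1}\pmod N$, back-substitution gives $d\equiv x_T-h_T\cdot u_T\pmod N$, and $\mathcal{C}$ outputs $d\times\mathbf{B}=ab\times\mathbf{G}$; the overall success probability is $\Omega(\epsilon^{2}/q_h)-\mathcal{O}((q_h+q_k)^{2}/N)$, non-negligible whenever $\epsilon$ is.

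The step I expect to be the main obstacle is making the rewinding rigorous: I must show that $\mathbf{U}_T$ is a deterministic function of $\mathcal{A}$'s random tape and of the transcript up to its $i$-th hash query --- the delicate part being that the key-query replies themselves reprogram $H$, so the two executions must coincide in \emph{all} of their key-query behaviour up to the fork point --- and I must carry the winning predicate ``$ID_T$ was never sent to \keyRequest'' inside the forking lemma's success event for both runs. Once the forking setup and the accounting for the degeneracy events ($h_T=h_T'$, $\mathbf{U}_T$ the neutral element, a hash collision during key simulation) are in place, the remaining probability bookkeeping is routine.
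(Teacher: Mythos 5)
Your proposal is correct and follows essentially the same route as the paper's proof: simulate key queries by programming the random oracle via $\mathbf{U}_{ID}\gets h^{-1}\times(x_{ID}\times\mathbf{G}-\mathbf{D})$, fork the adversary on the hash query for the target identity, and extract $d=\log_{\mathbf{G}}\mathbf{D}$ from the two linear verification relations. You are in fact somewhat more careful than the paper --- you make the forking/probability accounting explicit, write the denominator of the extraction as $h_T-h_T'$ (the paper has a sign typo there), and add the wrapper step $d\times\mathbf{B}$ that actually converts the recovered discrete logarithm into a Diffie--Hellman solution as the theorem statement demands.
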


\begin{proof}
	\C~runs the  $\texttt{AQ.Setup}(\cdot)$ algorithm and passes $ \mathbf{D} $ to \Adv.    \C~aims to solve the Discrete Logarithm (DL) problem on the input of   $ \mathbf{D} $.  \C~keeps tables $ L_h  $ and $L_k$ to keep track of  random oracle  queries and key queries,  respectively.

	\noindent \underline{  Setup \ro~Oracle}: \C~implements a function \hsim~to handle \ro~queries to random oracles $ H $. That is, the cryptographic hash functions $ H  $ are modeled as random oracles via \hsim~as follows.
	\begin{enumerate}[i)]
		
		\item $h \as \hsim(ID||\mathbf{U} _{ID},L_h)$: If $R \in L_h$ then \hsim~returns the corresponding value $h \as L_m(ID||\mathbf{U} _{ID})$. Otherwise, it returns $ h  \Rn  	 $ as the answer, and  inserts $(ID||\mathbf{U} _{ID},h )$ into $L_m$.

	\end{enumerate}
	\noindent  \underline{Queries of  $\mathcal{A}$}: \A~can query \ro~and \keyRequest~oracles on any message of its choice up to $q_H$ and $q_k$ times, respectively. %\F handles \A's queries as follows,
	\begin{enumerate}[1)]
		\item {\em Handle \ro~queries}: \A's queries on $ H  $ is handled by \hsim~function as described above.
		
		\item {\em Handle \keyRequest~queries}: To answer \A's~  key queries on any user of its choice $ID$, \C~inserts $ID$ into $  {L_k} $~and  continues as follows.
		
		%If $(M\in \lh)$ then \F {\em aborts}. This is not needed.
	\end{enumerate}
	\begin{enumerate}[i)]
		\setlength{\itemsep}{2pt}
		\setlength{\parskip}{0pt}
		\setlength{\parsep}{0pt}
		
		\item Pick  $ x_{ID} \Rn  $ and $ h \Rn  $. 
		\item Compute $ \mathbf{U}_{ID} \as  h^{-1}\times (x_{ID} \times \mathbf{G}- \mathbf{D})  $.
		\item \C~checks if $ L_k (ID||\mathbf{U}_{ID}) \neq \bot$, \C~{\em aborts}  and outputs $ 0 $.  Else, it sets $L_k (ID||\mathbf{U}_{ID})   \gets h$. 
		\item Lastly,  \C~outputs $ (x_{ID}, U_{ID}) $ and sets $ L_k(ID) \gets (x_{ID}, U_{ID}) $.
		
	\end{enumerate}
	
	\noindent\underline{Output of~$\mathcal{A}$}: Finally, $\mathcal{A}$ outputs a forgery key pair for any user of its choice $ ID^* $~as $( x_{ID^*}, \mathbf{U}_{ID^*})$. By Definition \ref{def:KG-sec}, \A~wins the if the below conditions hold.
	
	\begin{enumerate}[i)]
		\item $ 1 \gets \keyVer(x_{ID^*}, \mathbf{U}_{ID^*},ID^*)$
		\item $ID^{*} \notin L_k$
	\end{enumerate}
	
	\noindent \underline{{$ \mathcal{C}$'s Solution to the DH problem}}:  If  \A~fails to output a key pair], \C~ also fails in  solving the DL problem, and therefore, \C~{\em aborts} and returns $0$. 
	
	Otherwise,  if  \A~outputs a successful forgery $( x_{ID^*}, \mathbf{U}_{ID^*})$,  using the forking lemma  \cite{Bellare-Neven:2006},  $ \mathcal{C} $ can rewind $ \mathcal{A} $ to get a second forgery for user $ ID^* $ as $ ( \tilde{x}_{ID^*} , \tilde{\mathbf{U}}_{ID^*}) $ where $ x_{ID^*} \neq \tilde{x}_{ID^*} $ and  $  \mathbf{U}_{ID^*}= \tilde{\mathbf{U}}_{ID^*}$ with an overwhelming probability. Given~the forgeries $( x_{ID^*}, \mathbf{U}_{ID^*})$ and  $ ( \tilde{x}_{ID^*} , \tilde{\mathbf{U}}_{ID^*}) $  on $ ID^* $,  based on  \cite{Bellare-Neven:2006}, we know that  $ H(ID^*||\mathbf{U}_{ID}) \neq H(ID^*||\tilde{\mathbf{U}}_{ID})  $.     
	
	Given $ 1 \gets \keyVer( {x}_{ID^*},  {\mathbf{U}}_{ID^*},ID^*)$  and $ 1 \gets \keyVer(\tilde{x}_{ID^*},$ $ \tilde{\mathbf{U}}_{ID^*},ID^*)$,  \C~computes    $ b \gets  \frac{x_{ID^*}-\tilde{x}_{ID^*}}{H(ID^*||\mathbf{U}_{ID}) + H(ID^*||\tilde{\mathbf{U}}_{ID})}$ and solve for the discrete logarithm of $\mathbf{D} $ as  $ d \gets   x_{ID^*} -  H(ID^*||\mathbf{U}_{ID})  \cdot b $.
\end{proof}

\begin{theorem}\label{thm:Schnorr}
	The signature scheme proposed in Algorithm \autoref{alg:BPV-Schnorr} is EU-CMA secure  in the sense of  Definition \autoref{def:EUCMA}. 
	
\end{theorem}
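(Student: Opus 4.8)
The plan is to prove EU-CMA security of $\bpvsch$ (Algorithm \autoref{alg:BPV-Schnorr}) in the random oracle model by two successive reductions: first strip off the BPV precomputation, then identify the resulting scheme with the textbook Schnorr signature, whose ROM unforgeability is classical. Throughout, $H$ is modelled as a random oracle, and (unlike Theorem \autoref{thm:KG-sec}) Definition \autoref{def:EUCMA} grants the adversary only the $\bpvschsig$ oracle, so no $\keyRequest$/$\keyVer$ simulation is needed.

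\emph{Step 1: removing BPV.} In $\bpvschsig$ the precomputation table $\Gamma$ is used \emph{only} to produce the nonce $(r,\mathbf{R})$ with $r\as\sum_{i\in S}r'_i\bmod N$ and $\mathbf{R}\as r\times\mathbf{G}$. I would introduce a game $\mathsf{G}_1$ identical to the real experiment except that every signing query samples $r\Rn$ and sets $\mathbf{R}\as r\times\mathbf{G}$. Since $\Gamma$ never leaves the signer and each signature exposes only $\mathbf{R}=r\times\mathbf{G}$ together with $s=(r-e\cdot x)\bmod N$, and $x$ is unknown to $\mathcal{A}$ (so $s$ is perfectly blinded), a distinguisher between the real game and $\mathsf{G}_1$ is exactly a multi-sample solver for the affine hidden subset-sum problem underlying BPV; invoking the established analysis of BPV in the (EC)DLP setting \cite{BPV:Ateniese:Journal:ACMTransEmbeddedSys:2017,BPV:ECC:Coron2001} bounds the gap by a negligible term $\epsilon_{\mathsf{BPV}}$ for suitable $(v,k)$.

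\emph{Step 2: self-certified Schnorr is plain Schnorr.} In $\mathsf{G}_1$ the scheme is literally the Schnorr signature under the ``effective'' key pair $(x,\mathbf{Y})$ with $\mathbf{Y}=H(ID\|\mathbf{U})\times\mathbf{U}+\mathbf{D}$: by the key-generation step $x=H(ID\|\mathbf{U})\cdot b+d$, hence $\mathbf{Y}=x\times\mathbf{G}$, and then signing ($e\as H(m\|\mathbf{R})$, $s\as r-e\cdot x$) and verification ($\mathbf{R}'\as e\times\mathbf{Y}+s\times\mathbf{G}$, $e\qeq H(m\|\mathbf{R}')$) coincide with the standard algorithms. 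I would then build an EU-CMA adversary $\mathcal{B}$ against plain Schnorr: on Schnorr challenge key $\mathbf{Y}^{*}$, pick $\delta\Rn$, set $\mathbf{D}\as\delta\times\mathbf{G}$, pick $h^{*}\Rn$, set $\mathbf{U}\as (h^{*})^{-1}\times(\mathbf{Y}^{*}-\mathbf{D})$, program $H(ID\|\mathbf{U})\as h^{*}$, and hand $(\mathbf{U},ID,\mathbf{D})$ to $\mathcal{A}$. Because the map $(\mathbf{U},\mathbf{D},h^{*})\mapsto(\mathbf{D},h^{*},h^{*}\mathbf{U}+\mathbf{D})$ is a bijection on the group, this induces precisely the real joint distribution of $(\mathbf{U},\mathbf{D},H(ID\|\mathbf{U}),\mathbf{Y}^{*})$, except for the negligible event $\mathbf{Y}^{*}=\mathbf{D}$; all further $H$-queries are relayed to $\mathcal{B}$'s hash oracle and all signing queries forwarded to $\mathcal{B}$'s Schnorr signing oracle, whose uniform nonces match $\mathsf{G}_1$. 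A forgery $(m^{*},\langle s^{*},e^{*}\rangle)$ from $\mathcal{A}$ is then a valid Schnorr forgery under $\mathbf{Y}^{*}$ on the unqueried $m^{*}$. Composing the two steps gives $\mathit{Adv}^{\EUCMA}_{\bpvsch}(\mathcal{A})\le\advAsch+\epsilon_{\mathsf{BPV}}$; plugging in the classical ROM bound for Schnorr (Pointcheval--Stern / forking lemma \cite{Bellare-Neven:2006}) — equivalently, unrolling it into a direct reduction to the discrete-log problem exactly as in the proof of Theorem \autoref{thm:KG-sec} — shows this is negligible.

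\emph{Main obstacle.} The only genuinely delicate point is Step 1: the BPV nonces are not uniform, and across $q_s$ signing queries $\mathcal{A}$ accumulates $q_s$ points $\big(\sum_{i\in S_j}r'_i\big)\times\mathbf{G}$ that all share the single secret table, so soundness rests on invoking the right flavour of the hidden subset-sum assumption (or on choosing $(v,k)$ large enough to make $\epsilon_{\mathsf{BPV}}$ a purely statistical distance) and on checking that the accompanying $s_j$ values leak nothing extra since they are masked by the secret $x$. Step 2, by contrast, is a perfect simulation up to a negligible degenerate event and involves no real difficulty.
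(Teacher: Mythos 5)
Your proposal is correct and rests on the same two pillars the paper uses --- (i) the BPV/affine-hidden-subset-sum argument that the nonce $r$ is indistinguishable from uniform for suitable $(v,k)$, and (ii) the observation that what remains is Schnorr under the effective key $\mathbf{Y}=H(ID\|\mathbf{U})\times\mathbf{U}+\mathbf{D}=x\times\mathbf{G}$ --- but you assemble them quite differently, and more rigorously. The paper's proof is a short sketch: it cites Theorem~\autoref{thm:KG-sec} for the security of the self-certified key, asserts that the AQ key distribution is indistinguishable from uniform, notes the BPV statistical-closeness result, and concludes EU-CMA security under DL without exhibiting a reduction; composing ``the key generation is unforgeable'' with ``Schnorr is unforgeable'' is not automatic, and the paper leaves that composition implicit. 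Your Step~2 replaces the appeal to Theorem~\autoref{thm:KG-sec} entirely with a direct embedding of the plain-Schnorr challenge key via random-oracle programming ($\mathbf{U}\as(h^{*})^{-1}\times(\mathbf{Y}^{*}-\mathbf{D})$, $H(ID\|\mathbf{U})\as h^{*}$), which yields a perfect simulation of the key-setup distribution and a concrete bound $\mathit{Adv}^{\EUCMA}_{\bpvsch}(\mathcal{A})\le\advAsch+\epsilon_{\mathsf{BPV}}$; your Step~1 makes the game hop explicit where the paper only gestures at it. What your route buys is an actual black-box reduction with a quantified loss; what the paper's route buys is brevity and reuse of its Theorem~\autoref{thm:KG-sec}. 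Your flagged caveat about the multi-query flavour of the hidden-subset-sum assumption (the $q_s$ exposed points sharing one table, with $s_j$ masked by the secret $x$) is a real subtlety that the paper glosses over as well, so it is a caveat of the theorem itself rather than a gap in your argument.
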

\begin{proof}
	
Based on the results of   \autoref{thm:KG-sec}, we know that the     self-certified key     used in Algorithm \autoref{alg:BPV-Schnorr} is secure.  Moreover, The distribution of   keys generated by Algorithm \autoref{alg:AQ} is indistinguishable from   uniform random due \cite{Schnorr91}. As shown in the previous works \cite{BPV:basepaper:1998,BPV:Ateniese:Journal:ACMTransEmbeddedSys:2017}, one can see that the  distribution of \BPV~output $r$ is {\em statistically close} to the uniform random  with the appropriate choices	of parameters $(v,k)$. Following this, if  the ephemeral randomness $r$ in  $ \bpvschsig $ ~Step 2 is obtained from \BPV~generator, then $ \bpvsch  $ is secure  given the hardness of the  {\em Affine Hidden Subset Sum} problem~\cite{BPV:Ateniese:Journal:ACMTransEmbeddedSys:2017,HiddenSubsetSum:BPV:Analysis:Nguyen:Crypto1999}. Therefore, based on these results, the  signature scheme proposed in Algorithm \autoref{alg:BPV-Schnorr} is secure in the sense of Definition \autoref{def:EUCMA} under the DL problem.
\end{proof}

%=====================
%===================
%=================
%==============
%===========
\begin{theorem}\label{thm:ECIES}
	The encryption scheme proposed in Algorithm \autoref{alg:DBPV-ECIES} is IND-CPA secure  in the sense of  Definition \autoref{def:INDCPA}. 
	
\end{theorem}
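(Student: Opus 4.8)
The plan is a game-hopping argument that successively strips away the self-certification layer, then the DBPV precomputation layer, then the Diffie-Hellman layer, so that IND-CPA security of Algorithm~\autoref{alg:DBPV-ECIES} reduces to the IND-CPA security of the symmetric cipher $\mathcal{E}$ (Chacha~\cite{rfc7539,CHACHAPOLYSecurity}) in the random oracle model. Let $\mathsf{G}_0$ be the real IND-CPA experiment of Definition~\autoref{def:INDCPA} against $\bpvies$. In $\mathsf{G}_1$ I would replace the receiver's public key $\langle \mathbf{U},ID\rangle$ by one generated exactly as in the simulation in the proof of \autoref{thm:KG-sec}, so that the effective public point $\mathbf{Y}:=H(ID||\mathbf{U})\times\mathbf{U}+\mathbf{D}$ equals $y\times\mathbf{G}$ for a fresh uniform $y\Rn$, with the random oracle on $H(ID||\mathbf{U})$ programmed consistently; by \autoref{thm:KG-sec} and the fact -- already used in the proof of \autoref{thm:Schnorr} via~\cite{Schnorr91} -- that the AQ key distribution is indistinguishable from uniform, $|\Pr[\mathsf{G}_1]-\Pr[\mathsf{G}_0]|$ is negligible. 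In $\mathsf{G}_2$ I would replace the DBPV-derived pair $\mathbf{R}=\sum_{i\in S}\mathbf{R'}_i$, $\mathbf{S}=\sum_{i\in S}\mathbf{S'}_i$ used in the challenge ciphertext by sampling $r\Rn$ and setting $\mathbf{R}:=r\times\mathbf{G}$, $\mathbf{S}:=r\times\mathbf{Y}$ directly; this is faithful because $\mathbf{S'}_i=r'_i\times\mathbf{Y}$ gives $\mathbf{S}=\left(\sum_{i\in S}r'_i\right)\times\mathbf{Y}=r\times\mathbf{Y}$, and the distribution of $r$ is statistically close to uniform for suitable parameters $(v,k)$ by~\cite{BPV:basepaper:1998,BPV:Ateniese:Journal:ACMTransEmbeddedSys:2017} (formally, under the Affine Hidden Subset Sum assumption~\cite{HiddenSubsetSum:BPV:Analysis:Nguyen:Crypto1999}). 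After $\mathsf{G}_2$ the challenge ciphertext is distributed exactly as in textbook ECIES over FourQ.

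The core hop is $\mathsf{G}_2\to\mathsf{G}_3$: modeling $\texttt{KDF}$ as a random oracle, I would replace $(k_{enc},k_{MAC})\leftarrow\texttt{KDF}(\mathbf{S})$ in the challenge ciphertext by a uniformly random pair. The two games are identical unless \A queries $\texttt{KDF}$ on the point $\mathbf{S}=r\cdot y\cdot\mathbf{G}$, i.e. the CDH of the challenge ephemeral point $\mathbf{R}=r\times\mathbf{G}$ and the public key $\mathbf{Y}=y\times\mathbf{G}$. A reduction $\mathcal{B}$ takes a FourQ CDH instance $(\mathbf{Y},\mathbf{R})$, plants $\mathbf{Y}$ as the public key (via the $\mathsf{G}_1$ simulation) and $\mathbf{R}$ as the challenge ephemeral point, answers $\texttt{KDF}$ queries from a lazily sampled table, and outputs any fresh $\texttt{KDF}$-query input as its guess for $r\cdot y\cdot\mathbf{G}$; hence $|\Pr[\mathsf{G}_3]-\Pr[\mathsf{G}_2]|\le q_{\mathsf{KDF}}\cdot\mathit{Adv}^{\mathrm{CDH}}_{\mathrm{FourQ}}(\mathcal{B})$. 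Note that no Gap-DH or DDH oracle is needed here, since Definition~\autoref{def:INDCPA} grants the adversary no decryption oracle.

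Finally, in $\mathsf{G}_3$ the key $k_{enc}$ encrypting $m_b$ as $c\leftarrow\mathcal{E}_{k_{enc}}(m_b)$ is uniform and independent of everything else the adversary sees ($\mathbf{R}$, the ciphertext $c$ itself, and the tag $d=\texttt{MAC}_{k_{MAC}}(c)$, with $k_{MAC}$ likewise uniform and independent), so $\Pr[b=b']-\tfrac12$ in $\mathsf{G}_3$ is at most the IND-CPA advantage of $\mathcal{E}$, which is negligible. Collecting the four bounds yields a negligible overall advantage, proving the theorem.

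The step I expect to be the main obstacle is $\mathsf{G}_2\to\mathsf{G}_3$. It must be carried out \emph{after} $\mathsf{G}_2$, so that the point $\mathbf{R}$ into which the CDH challenge is embedded is genuinely uniform (the raw DBPV output is only statistically close to uniform, and the reduction needs its simulated challenge to be a faithful CDH instance); and one must argue that releasing the MAC tag $d$ alongside $c$ leaks nothing once both $k_{enc}$ and $k_{MAC}$ are random -- immediate for IND-CPA, but worth stating explicitly. A shorter route, at the level of rigor used for \autoref{thm:Schnorr}, is to invoke the Hashed-Diffie-Hellman (ODH-without-oracle) assumption on FourQ directly and collapse $\mathsf{G}_2\to\mathsf{G}_3$ together with the final symmetric step into a single reduction.
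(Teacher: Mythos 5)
Your proposal is correct and follows essentially the same decomposition as the paper's proof: first dispose of the self-certified-key layer (via \autoref{thm:KG-sec}), then argue the DBPV output is distributed like the BPV/uniform ephemeral randomness because the designated point is itself a generator, and finally fall back on the standard ECIES argument. The only difference is one of explicitness --- the paper closes by citing the ECIES security proof of~\cite{Pointcheval00psec3}, whereas you carry out that last step yourself as a CDH-in-the-ROM game hop plus the symmetric IND-CPA step, which is a faithful (and more self-contained) rendering of the same argument.
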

\begin{proof}
%	For this proof, we we show that the adoption of self-certified  keys and the DBPV  precomputation technique is secure and the scheme proposed in \autoref{alg:DBPV-ECIES}, is still secure as shown in  \cite{Pointcheval00psec3}.
%	
	The security and properties of the self-certified keys adopted in   Algorithm \autoref{alg:DBPV-ECIES}  follows from the first part of the proof in \autoref{thm:Schnorr}. 
	The original BPV, as shown in Algorithm \autoref{alg:BPV} computes $ \mathbf{R} $ by  multiplying a random $ r $ with the generator of the group $ \mathcal{E} (\mathcal{F}_{p^2}[N])$. However, due to the property of the group $ \mathcal{E} (\mathcal{F}_{p^2}[N])$, in the DBPV method used in Algorithm \autoref{alg:DBPV-ECIES},  the public key  of the user, $ (x\cdot H(ID||\mathbf{U})\cdot b) \times \mathbf{G}$, is also a generator of the group. Therefore, proving the randomness of the BPV output (as also studied in  \cite{BPV:basepaper:1998,BPV:Ateniese:Journal:ACMTransEmbeddedSys:2017}) for the DBPV method is identical to the original BPV method. Based on these results,       the encryption  scheme proposed in Algorithm \autoref{alg:DBPV-ECIES} is secure in the sense of Definition \autoref{def:INDCPA}, and the proof can be obtained  identical as in~\cite{Pointcheval00psec3}.
\end{proof}

\noindent \textbf{Parameter Choice: }The security of our proposed schemes depend on the underlying curve and BPV/DBPV parameter choice of $(v,k)$. As the underlying curve, we select FourQ, that offers $128$-bit security level and very fast curve operations such as EC addition~\cite{FourQBase}. Security of BPV/DBPV depends on the $v$-out-of-$k$ different combinations that could be created with the precomputed table. Moreover, the selection of $v$ and $k$ have a trade-off between storage and computation, where increasing $k$ increases the storage and increasing $v$ increases the computation. We offer two parameter sets, specifically $v = 28$, $k = 256$ and $v = 18$, $k = 1024$ that offer $2^{128}$ different combinations, and therefore 128-bit security level. From these two, we focus on $v = 28, k = 256$ since it requires $4$$\times$ less storage, which can easily fit highly resource-constrained processors such as 8-bit AVR.

\section{Performance Analysis and Comparison}\label{sec:Performance}

We fully implemented \Dronecrypt~on two common drone microprocessors and assessed the performance and energy efficiency of our schemes. We measured the energy consumption with the formula $E = V \cdot I \cdot t$, following the works~\cite{Ozmen_IOT_SP,Dronecrypt,FourQ8bit}. We compare the costs of our optimized techniques with the standard ones, specifically, we implemented ECDH, ECDSA and ECIES protocols on the standardized secp256k1 curve, and adopted AES and HMAC as the symmetric ciphers.

\begin{table*}[t!]
	\centering
	\small
	\caption{Comparison of Public Key Primitives on 8-bit AVR Processor} \label{tab:AVR:PublicKey}
	\resizebox{\textwidth}{!}{
		\begin{threeparttable}
			\begin{tabular}{| c || c | c |  c | c | c | c |}
				\hline
				\textbf{Protocol} & \textbf{CPU Cycles} & \specialcell[]{\textbf{CPU Time} \\ \textbf{($s$)}} & \specialcell[]{\textbf{Memory\tnote{$ \dagger $}} \\ \textbf{($Byte$)}} & \specialcell[]{\textbf{Bandwidth} \\ \textbf{($Byte$)}} & \specialcell[]{\textbf{Energy} \\ \textbf{Consumption ($mJ$)}} & \specialcell[]{\textbf{Certificate} \\ \textbf{Overhead}} \\ \hline
				
				\multicolumn{7}{|c|}{\textbf{\em Standard Techniques}} \\ \hline 
					\rowcolor{lightgray}
				ECDH & $ 25,840,000 $ & $ 1.61  $& $ 32 $ & $ 32 $ & $ 161.52 $ & Yes \\	
				
				Ephemeral ECDH & $ 54,380,000 $ & $ 3.40 $ & $ 32 $ & $ 32 $ & $ 339.86 $ & Yes \\
					\rowcolor{lightgray}
				ECDSA-Sign & $ 28,370,000 $ &$  1.77 $ & $ 32 $ & $ 64 $ & $ 177.30 $ & Yes \\
				
				ECDSA-Verify & $ 28,960,000 $ &$  1.81 $ & $ 32 $ & $ 64 $ & $ 181.01 $ & Yes \\
					\rowcolor{lightgray}
				ECIES-Encrypt & $ 54,450,000 $ & $ 3.40 $ & $ 32 $ &  $ 32 $ + |c| + |MAC|  & $ 340.34 $ & Yes \\
				
				ECIES-Decrypt & $ 25,920,000 $ & $ 1.62 $ & $ 32 $ &   $ 32 $ + |c| + |MAC| & $ 161.99 $ & Yes \\ \hline
										
				\multicolumn{7}{|c|}{\textbf{\em \Dronecrypt}} \\ \hline 
				\rowcolor{lightgray}
				
				\texttt{AQ} &$  6,940,000 $ &$  0.43 $ & $ 32 $ & $ 32 $ &$  43.38 $ &  No \\
				
				\texttt{BPV-AQ-Hang} & $ 9,140,000 $ & $ 0.57 $ & $ 16416 $ & $ 32 $ &$  57.14 $ &  No \\
					\rowcolor{lightgray}
				$\bpvschsig$ &$  2,490,000 $ & $ 0.15  $& $ 16416 $ & $ 64 $ & $ 15.57 $ & No \\
				
				$\bpvschver$ & $ 8,310,000 $  & $ 0.52 $ & $ 32 $ & $ 64 $ & $ 51.94 $ & No\\
					\rowcolor{lightgray}
				$\bpviessig$ & $ 4,290,000 $ &$  0.27 $ & $ 24608 $ & $ 32 $ + |c| + |MAC| & $ 26.80 $  & No \\
				
				$\bpviesver$ & $ 6,980,000 $ & $ 0.44 $ & $ 32 $ & $ 32 $ + |c| + |MAC| & $ 43.61 $ & No \\ \hline

			\end{tabular}
			\begin{tablenotes}[flushleft] \scriptsize{
					$\dagger $ Memory denotes private key size for sign/encrypt schemes as signer/sender stores it, memory denotes public key size for verify/decrypt schemes as verifier/receiver stores it.
				}
			\end{tablenotes}
		\end{threeparttable}
	}
\end{table*}

\begin{table}[t!]
	\centering
	\small
	\caption{Comparison of Symmetric Ciphers on 8-bit AVR} \label{tab:AVR:Symmetric}
	\vspace{-1mm}
	\resizebox{0.49\textwidth}{!}{
	\begin{threeparttable}
		\begin{tabular}{| c || c | c | c | c |}
			\hline
			\textbf{Protocol} & \textbf{($KB/s$)} & \specialcell[]{\textbf{CPU} \\ \textbf{Cycles\tnote{$ \dagger $}}} & \specialcell[]{\textbf{CPU} \\ \textbf{Time ($ms$)}} &  \specialcell[]{\textbf{Energy} \\ \textbf{Cons. ($\mu J$)}} \\ \hline

			\multicolumn{5}{|c|}{\textbf{\em Standard Techniques}} \\ \hline 
			
			AES & $ 29.339 $ & $ 19,500 $ & $ 1.22 $ & $ 122.34 $\\
			
				\rowcolor{lightgray}
			AES-GCM & $ 8.890 $ & $ 76,500  $& $ 4.78 $ &$  478.15 $ \\
			
			HMAC\tnote{$ \ddagger $} & $ 22.017 $ & $ 182,200 $ &$  11.39 $ & $ 1138.69 $ \\ \hline
			
			\multicolumn{5}{|c|}{\textbf{\em \Dronecrypt}} \\ \hline 
			
			CHACHA20 & $ 65.670 $ & $ 8,300 $ & $ 0.52 $ & $ 521.89 $\\
			\rowcolor{lightgray}
			CHACHA-POLY & $ 23.687 $ & $ 35,500 $ & $ 2.22 $ & $ 222.03 $ \\
			
			POLY1305 & $ 37.169 $ & $ 21,400 $ & $ 1.34 $ & $ 133.81 $\\  \hline

		\end{tabular}
		\begin{tablenotes}[flushleft] \scriptsize{
				$\dagger $ CPU cycles presented here are for a $32$-byte message. \\
				$ \ddagger $ SHA256 is used as the standard hash function for HMAC.
			}
		\end{tablenotes}
	\end{threeparttable}
	}
\end{table}

\subsection{Performance on 8-bit AVR Processor}

\noindent \textbf{Hardware Configurations: }We implemented \Dronecrypt~targeting an 8-bit AVR ATmega 2560 processor. ATmega 2560 has a maximum frequency of $16$ MHz and it is equipped with $8$ $KB$ SRAM and $256$ $KB$ flash memory. ATmega 2560 operates at $V = 5$ $V$ and $I = 20$ $mA$ while running at $16$ $MHz$ (full capacity), that is used to estimate the energy consumption. Since we used IAR Embedded Workbench to develop \Dronecrypt, our codes can be easily tested/used in other 8-bit AVR processors.

\noindent \textbf{Implementation: }We implemented the public key primitives of \Dronecrypt~using the open-source implementation of FourQ library on microprocessors~\cite{FourQ8bit}, which provides the basic EC operations such as EC scalar multiplication, EC addition. As for the symmetric primitives, we used   Weatherley's   library~\cite{weatherley}. We used the cycle-accurate simulator of IAR Embedded Workbench to benchmark our schemes, as in~\cite{FourQ8bit}. For the standard public key primitives, we used micro-ecc~\cite{microECC}, that offers the lightweight implementations of standard elliptic curves for microprocessors.

The results of our experiments are presented in~\autoref{tab:AVR:PublicKey} for the public key primitives and in~\autoref{tab:AVR:Symmetric} for the symmetric ones. As shown, \Dronecrypt~primitives outperform their counterparts in computation time and energy consumption. On the other hand, they require a storage of $16$ $KB$ for the adoption of  BPV and $24$ $KB$ for DBPV. However, these precomputation tables are stored as a part of the flash memory of ATmega 2560 and they take less than $10\%$ of the memory. Moreover, the standard public key techniques take a few seconds, that might be infeasible for time-critical IoD applications, whereas \Dronecrypt~signature generation takes $150$ $ms$ and public key encryption takes $270$ $ms$.

\subsection{Performance on 32-bit ARM Processor}

\noindent \textbf{Hardware Configurations:} We targeted an actual drone for our tests on 32-bit ARM processor, namely Crazyflie 2.0~\cite{bitcraze2016crazyflie}. Crazyflie 2.0 is equipped with an STM32F4 processor (32-bit ARM Cortex M-4 architecture) that has $192$ $KB$ SRAM and $1$ $MB$ flash memory, and operates at $168$ $MHz$. At max frequency, the processor requires $3.3$ $V$ and $40$ $mA$.

\noindent \textbf{Implementation: }We used the Microsoft FourQ library to implement \Dronecrypt~for 32-bit ARM processor~\cite{FourQBase,FourQ8bit}. This library   provided the basic EC operations which we used to build the protocols with the optimization techniques. For our counterparts, we  used micro-ecc~\cite{microECC} and for symmetric primitives, we used Wolfcrypt library~\cite{WolfCrypt}.

\autoref{tab:ARM:PublicKey} and~\autoref{tab:ARM:Symmetric} show the benchmarks of \Dronecrypt~and the standard techniques on STM32F4 processor. Our experiments showed that \Dronecrypt~offers up to $48$$\times$ less energy consumption. Moreover, \Dronecrypt~can support to very high message throughputs due to its high efficiency. More specifically, with \Dronecrypt, an STM32F4 processor can compute $555$ signatures and encrypt $448$ messages per second. In our experiments, we stored the precomputation tables on the SRAM.

\begin{table*}[t!]
	\centering
	\small
	\caption{Comparison of Public Key Primitives on 32-bit ARM Processor} \label{tab:ARM:PublicKey}
	\resizebox{\textwidth}{!}{
		\begin{threeparttable}
			\begin{tabular}{| c || c | c |  c | c | c | c |}
				\hline
				\textbf{Protocol} & \textbf{CPU Cycles} & \specialcell[]{\textbf{CPU Time} \\ \textbf{($ms$)}} & \specialcell[]{\textbf{Memory\tnote{$ \dagger $}} \\ \textbf{($Byte$)}} & \specialcell[]{\textbf{Bandwidth} \\ \textbf{($Byte$)}} & \specialcell[]{\textbf{Energy} \\ \textbf{Consumption ($mJ$)}} & \specialcell[]{\textbf{Certificate} \\ \textbf{Overhead}} \\ \hline

				\multicolumn{7}{|c|}{\textbf{\em Standard Techniques}} \\ \hline 
				\rowcolor{lightgray}
				ECDH & $8,710,000$ & $ 51.84 $ & $ 32 $ & $ 32 $ &$  6.84 $ & Yes \\
				
				Ephemeral ECDH & $ 17,611,000 $ & $ 104.83 $ & $ 32 $ & $ 32 $ & $ 13.84 $  &  Yes \\
				\rowcolor{lightgray}
				ECDSA-Sign & $ 9,412,000 $ & $ 56.02 $ & $ 32 $ & $ 64 $ & $ 7.40 $  & Yes \\
				
				ECDSA-Verify & $ 8,169,000 $ & $ 48.62 $ & $ 32 $ & $ 64 $ & $ 6.42 $ & Yes \\
				\rowcolor{lightgray}
				ECIES-Encrypt & $ 17,625,000 $ & $ 104.91 $ & $ 32 $ & $ 32 $ + |c| + |MAC| & $ 13.85 $ & Yes \\
				
				ECIES-Decrypt & $ 8,818,000 $ &$  52.49  $& $ 32 $ & $ 32 $ + |c| + |MAC| &$  6.93 $ & Yes \\ \hline
				
				\multicolumn{7}{|c|}{\textbf{\em \Dronecrypt}} \\ \hline 
				\rowcolor{lightgray}
				\texttt{AQ} & $ 556,000 $ &$  3.33 $ & $ 32 $ & $ 32 $ &$  0.44 $ &  No \\
				
				\texttt{BPV-AQ-Hang} & $ 764,000 $ & $ 4.55 $ & $ 16416 $ & $ 32 $ & $ 0.60 $ &  No \\
				\rowcolor{lightgray}
				$\bpvschsig$ & $ 302,000 $ & $ 1.80 $ & $ 16416 $ & $ 64 $ &$  0.24 $ & No \\
				
				$\bpvschver$ & $ 695,000  $ & $ 4.14  $& $ 32 $ & $ 64 $ & $ 0.55 $ & No\\
				\rowcolor{lightgray}
				$\bpviessig$ & $ 374,000 $ & $ 2.23  $& $ 24608 $ & $ 32 $ + |c| + |MAC| & $ 0.29 $  & No \\
				
				$\bpviesver$ &$  570,000 $ & $ 3.39 $ & $ 32 $ & $ 32 $ + |c| + |MAC| & $ 0.45 $ & No \\ \hline

			\end{tabular}
			\begin{tablenotes}[flushleft] \scriptsize{
					$\dagger $ Memory denotes private key size for sign/encrypt schemes as signer/sender stores it, memory denotes public key size for verify/decrypt schemes as verifier/receiver stores it.
				}
			\end{tablenotes}
		\end{threeparttable}
	}
\end{table*}

\begin{table}[t!]
	\centering
	\small
	\caption{Comparison of Symmetric Ciphers on 32-bit ARM} \label{tab:ARM:Symmetric}
	\vspace{-1mm}
	\resizebox{0.49\textwidth}{!}{
	\begin{threeparttable}
		\begin{tabular}{| c || c| c | c | c |}
			\hline
			\textbf{Protocol} & \textbf{($MB/s$)} & \specialcell[]{\textbf{CPU} \\ \textbf{Cycles\tnote{$ \dagger $}}} & \specialcell[]{\textbf{CPU} \\ \textbf{Time ($\mu s$)}} &  \specialcell[]{\textbf{Energy} \\ \textbf{Cons. ($\mu J$)}} \\ \hline

			\multicolumn{5}{|c|}{\textbf{\em Standard Techniques}} \\ \hline 
			
			AES & $ 0.926 $ & $ 5,540 $ & $ 32.96 $ &$  4.35 $\\
			
			\rowcolor{lightgray}
			AES-GCM & $ 0.377 $ &$  13,600 $ & $ 80.95 $ &$  10.68 $ \\
			
			HMAC\tnote{$ \ddagger $} &$  3.338  $& $ 1,540 $ & $ 9.14 $ &$  1.21 $ \\ \hline 
			
			\multicolumn{5}{|c|}{\textbf{\em \Dronecrypt}} \\ \hline 
			
			CHACHA20 & $ 3.554 $ &$  1,440 $&$  8.59 $ & $ 1.13 $ \\
			\rowcolor{lightgray}
			CHACHA-POLY &$  2.619 $& $ 1,960$ & $11.65$ & $1.54 $ \\
			
			POLY1305 & $ 13.709 $ & $ 370 $ &$  2.23 $ &  $ 0.29  $\\  \hline

		\end{tabular}
		\begin{tablenotes}[flushleft] \scriptsize{
				$\dagger $ CPU cycles presented here are for a $32$-byte message. \\
				$ \ddagger $ SHA256 is used as the standard hash function for HMAC.
			}
		\end{tablenotes}
	\end{threeparttable}
	}
\end{table}

\subsection{Performance on Commodity Hardware}

Considering the recipient side (e.g., command centers, ZSPs, and servers) that are involved in IoD networks are equipped with high-end processors, we also measured the costs of \Dronecrypt~on a commodity hardware. In our experiments, we used an Intel i7 Skylake $2.6$ $GHz$ CPU running with Linux 16.04. Our results showed that these processors can support up to a very large number of message throughputs so that they can communicate with a large drone fleet easily. More specifically, signing a message (with $\bpvschsig$) only takes around $10$ $ \mu s $ and encrypting a message (with $\bpviessig$) takes around $15$ $ \mu s $ . Verification and decryption algorithms in \Dronecrypt~only take $40-45$ $ \mu s $ . Therefore, \Dronecrypt~supports the secure communication of thousands of messages, and thereby permits a better quality-of-service and response time, wherein recipients (e.g., ZSPs) may handle many drones simultaneously. We open-sourced our codes on the 64-bit processor as well, to complete our framework.

\section{Conclusion}\label{sec:Conclusion}

In this paper, we propose \Dronecrypt, an efficient public-key based cryptographic framework that is tailored for the stringent energy and network requirements of IoD. \Dronecrypt~eliminates (i) the conventional PKI overhead with self-certified keys and (ii) EC scalar multiplications with BPV and DBPV techniques, and adopts FourQ curve that offers very fast curve operations (e.g., EC addition). We prove that our improved schemes in \Dronecrypt~are secure, and provide their  implementation on two common drone processors, namely 8-bit AVR and 32-bit ARM. Our experiments showed that \Dronecrypt~offers up to $13$$\times$ and $48$$\times$ less energy consumption compared to standard techniques, on 8-bit AVR and 32-bit ARM, respectively.

\bibliographystyle{IEEEtran}
\bibliography{crypto-etc,../../../Cryptoetc/crypto-etc}

\begin{IEEEbiography}[
	{
		\includegraphics[width=1in,height=1.25in,clip,keepaspectratio]{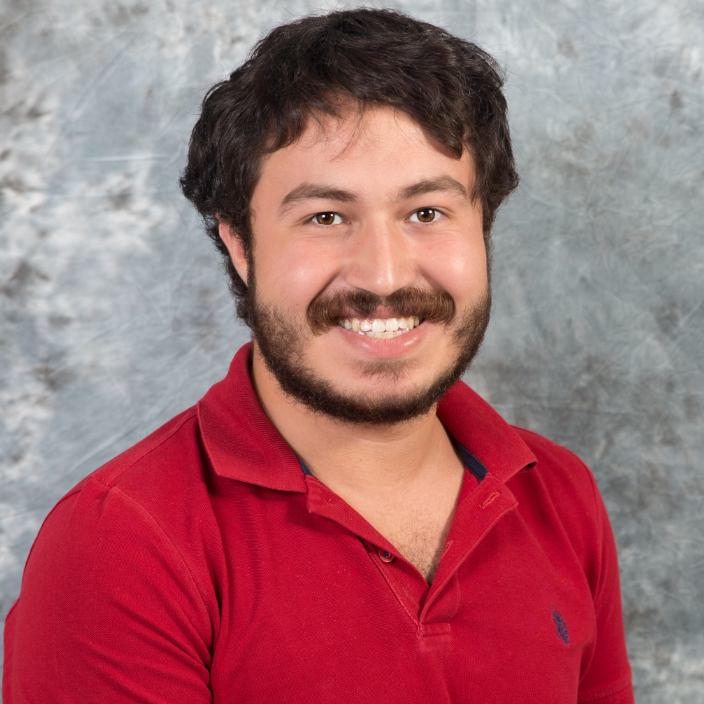}
	}
	]{Muslum Ozgur Ozmen}
	received the bachelor`s degree in electrical and electronics engineering from the Bilkent University, Turkey and the M.S. degree in computer science from Oregon State University. He is currently pursuing a PhD degree in computer science with the Department of Computer Science and Engineering, University of South Florida. His research interests include lightweight cryptography for IoT systems (drones and medical devices), digital signatures, privacy enhancing technologies (dynamic symmetric and public key based searchable encryption) and post-quantum cryptography.
\end{IEEEbiography}
\vskip 0pt plus -1fil

\begin{IEEEbiography}[
	{
		\includegraphics[width=1in,height=1.25in,clip,keepaspectratio]{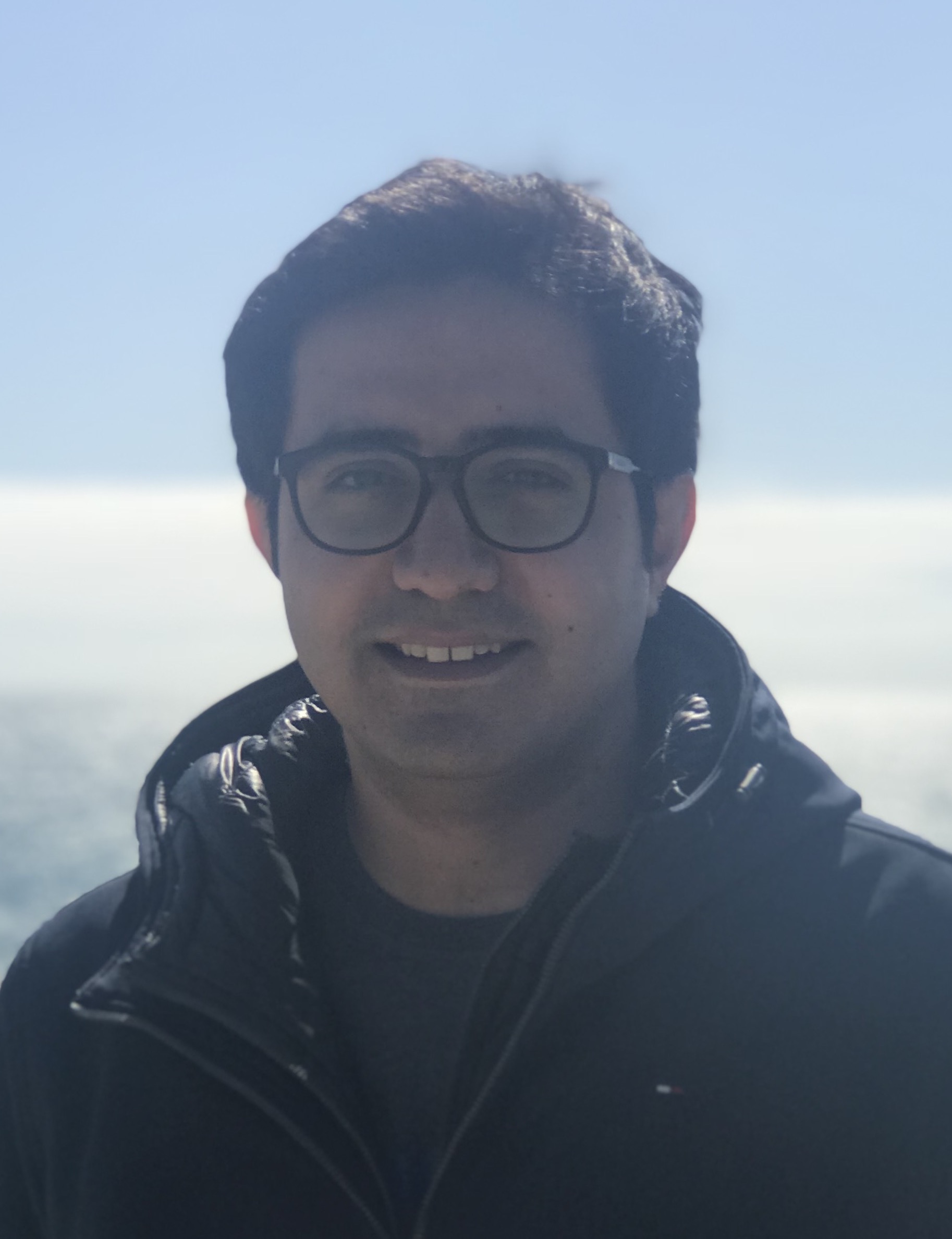}
	}
	] {Rouzbeh Behnia}
	received an M.S. degree from Multimedia University, Malaysia in 2013. He was a recipient of an FRGS grant during his time as a lecturer in Multimedia University and was awarded a Researcher Award in 2016. In 2016, he received an Outstanding Graduate Scholar award from the School of  Electrical Engineering and Computer Science, Oregon State University. He is currently pursuing a Ph.D. in Computer Science at the University of South Florida.  His research interests include post-quantum cryptography, privacy enhancing technologies and efficient authentication schemes.
\end{IEEEbiography}
\vskip 0pt plus -1fil

\begin{IEEEbiography}[
	{
		\includegraphics[width=1in,height=1.25in,clip,keepaspectratio]{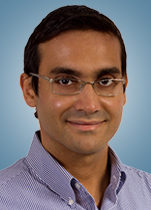}
	}
	]
	{Attila Altay Yavuz}
	(M `11) is an Assistant Professor in the Department of Computer Science and Engineering, University of South Florida (August 2018). He was an Assistant Professor in the School of Electrical Engineering and Computer Science, Oregon State University (2014-2018). He was a member of the security and privacy research group at the Robert Bosch Research and Technology Center North America (2011-2014). He received his PhD degree in Computer Science from North Carolina State University in August 2011. He received his MS degree in Computer Science from Bogazici University (2006) in Istanbul, Turkey. He is broadly interested in design, analysis and application of cryptographic tools and protocols to enhance the security of computer networks and systems. Attila Altay Yavuz is a recipient of NSF CAREER Award (2017). His research on privacy enhancing technologies (searchable encryption) and intra-vehicular network security are in the process of technology transfer with potential world-wide deployments. He has authored more than 40 research articles in top conferences and journals along with several patents. He is a member of IEEE and ACM.
\end{IEEEbiography}

\end{document}